\documentclass[aps,superscriptaddress,pra,notitlepage,nofootinbib,twocolumn,groupedaddress]{revtex4-1}


\usepackage[utf8]{inputenc} 
\usepackage{textcomp} 
\usepackage{graphicx}  
\usepackage{flafter}  

\usepackage{amsmath,amssymb,amsfonts,amsthm}  
\usepackage{bm}  
\usepackage{bbm}

\usepackage{verbatim}
\usepackage{xcolor}

\usepackage[pdftex,bookmarks,colorlinks,breaklinks]{hyperref}  
\definecolor{dullmagenta}{rgb}{0.4,0,0.4}   
\definecolor{darkblue}{rgb}{0,0,0.4}
\hypersetup{linkcolor=red,citecolor=blue,filecolor=dullmagenta,urlcolor=darkblue} 
\usepackage{memhfixc}  
\usepackage{pdfsync}  

\newcommand{\gnorm}[2]{\left\|#1\right\|_{#2}}
\newcommand{\dnorm}[1]{\gnorm{#1}{\diamond}}
\newcommand{\cbnorm}[1]{\gnorm{#1}{\rm cb}}
\newcommand{\onenorm}[1]{\gnorm{#1}{1}}
\newcommand{\opnorm}[1]{\gnorm{#1}{\infty}}

\newtheorem{theorem}{Theorem}
\newtheorem{lemma}{Lemma}
\newtheorem{corollary}{Corollary}

\usepackage{braket}
\newcommand{\ketbra}[1]{|#1\rangle\langle #1|}
\def\tr{{\rm Tr}}

\def\eps{\varepsilon}
\def\cA{\mathcal A}
\def\cM{\mathcal M}
\def\cC{\mathcal C}
\def\cE{\mathcal E}
\def\cH{\mathcal H}
\def\cN{\mathcal N}
\def\sB{\mathsf B}
\def\sS{\mathsf S}
\def\sC{\mathtt C}
\renewcommand{\rho}{\varrho}

\def\id{\mathbbm{1}}


\newcommand*{\half}{\frac{1}{2}}

\newcommand*{\cF}{\mathcal{F}}

\newcommand*{\cI}{\mathcal{I}}

\newcommand*{\cQ}{\mathcal{Q}}

\newcommand*{\cP}{\mathcal{P}}

\usepackage{xspace}

\usepackage{tikz}

\usetikzlibrary{positioning}
\usetikzlibrary{fit}

\definecolor{waikawa}{rgb}{0.345,.455,.596}
\definecolor{finlandia}{rgb}{.345,.439,.345}
\definecolor{olivine}{cmyk}{0.173, 0, 0.32, 0.227}
\definecolor{azure}{rgb}{.2,.4,.6}

\definecolor{lightcoral}{rgb}{.941,.5,.5}
\definecolor{cornflowerblue}{rgb}{.392,.584,.929}
\definecolor{lightcfblue}{rgb}{.55,.69,.96}
\definecolor{sand}{rgb}{1,.88,.36}
\definecolor{navajowhite}{rgb}{1,.871,.678}
\definecolor{palegreen}{rgb}{.565,.933,.565}
\definecolor{limegreen}{rgb}{.196,.804,.196}
\definecolor{chartreuse}{rgb}{.5,1,0}
\definecolor{plaingreen}{rgb}{0.5,.75,0.5}
\definecolor{anothergreen}{rgb}{.32,.81,.29}
\definecolor{thistle}{rgb}{.847,.749,.747}
\definecolor{lavender}{rgb}{.902,.902,.980}
\definecolor{honeydew}{rgb}{.941,1,.941}
\definecolor{palegoldenrod}{rgb}{.933,.91,.667}

\definecolor{darksand}{rgb}{1,.64,.36}

\def\figjmbasic{
\begin{tikzpicture}[
	apparatus/.style={draw,fill=palegoldenrod,minimum height=18mm,minimum width=5mm},
	measurement/.style={draw,fill=blue!20}
	]
	\node[apparatus] (app) at (0,0) {\Large $\mathcal A_{X,Z}$};
	\draw[very thick] (app.west) -- node[below] {$S$} ++(-0.5,0);
	\draw ([yshift=5.5mm]app.east) -- ++(0.6,0);
	\draw ([yshift=4.5mm]app.east) -- node[below] {$R_X$} ++(0.6,0);
	\draw ([yshift=-4.5mm]app.east) -- ++(0.6,0);
	\draw ([yshift=-5.5mm]app.east) -- node[below] {$R_Z$} ++(0.6,0);
\end{tikzpicture}
}

\def\figjmX{
\begin{tikzpicture}[scale=0.75,transform shape,
	apparatus/.style={draw,fill=palegoldenrod,minimum height=18mm,minimum width=5mm},
	measurement/.style={draw,fill=blue!20}
	]
	\node[apparatus] (app) at (0,0) {\Large $\mathcal A_{X,Z}$};
	\draw[very thick] (app.west) -- node[below] {$S$} ++(-0.5,0);
	\draw ([yshift=5.5mm]app.east) -- ++(0.6,0);
	\draw ([yshift=4.5mm]app.east) -- node[below] {$R_X$} ++(0.6,0);
	\draw[semitransparent] ([yshift=-4.5mm]app.east) -- ++(0.6,0);
	\draw[semitransparent] ([yshift=-5.5mm]app.east) -- node[below] {$R_Z$} ++(0.6,0);
	
	\node[draw,fill=lightcfblue,minimum size=12mm] (X) at (3.6,0) {\Large $\mathcal Q_X$};
	\draw[very thick] (X.west) -- node[below] {$S$} ++(-0.5,0);
	\draw ([yshift=0.5mm]X.east) -- ++(0.6,0);
	\draw ([yshift=-0.5mm]X.east) -- node[below] {$R_X$} ++(0.6,0);
	
	\node (approx) at (1.8,0) {\Large $\approx$};
	\node at ([xshift=0.5mm]approx.south east) {$\eps_X$};
\end{tikzpicture}
}

\def\figjmZ{
\begin{tikzpicture}[scale=0.75,transform shape,
	apparatus/.style={draw,fill=palegoldenrod,minimum height=18mm,minimum width=5mm},
	measurement/.style={draw,fill=blue!20}
	]
	\node[apparatus] (app) at (0,0) {\Large $\mathcal A_{X,Z}$};
	\draw[very thick] (app.west) -- node[below] {$S$} ++(-0.5,0);
	\draw[semitransparent] ([yshift=5.5mm]app.east) -- ++(0.6,0);
	\draw[semitransparent] ([yshift=4.5mm]app.east) -- node[below] {$R_X$} ++(0.6,0);
	\draw ([yshift=-4.5mm]app.east) -- ++(0.6,0);
	\draw ([yshift=-5.5mm]app.east) -- node[below] {$R_Z$} ++(0.6,0);
	
	\node[draw,fill=lightcoral,minimum size=12mm] (X) at (3.6,0) {\Large $\mathcal Q_Z$};
	\draw[very thick] (X.west) -- node[below] {$S$} ++(-0.5,0);
	\draw ([yshift=0.5mm]X.east) -- ++(0.6,0);
	\draw ([yshift=-0.5mm]X.east) -- node[below] {$R_Z$} ++(0.6,0);
	
	\node (approx) at (1.8,0) {\Large $\approx$};
	\node at ([xshift=0.5mm]approx.south east) {$\eps_Z$};
\end{tikzpicture}
}

\def\figjmall{
\begin{tikzpicture}
	\node (basic) at (0,0) {\figjmbasic};
	\node at (basic.north west) {a)};
	\node (jmX) at (4.20,1) {\figjmX};
	\node at (jmX.north west) {b)};
	\node (jmZ) at (4.20,-1) {\figjmZ};
	\node at (jmZ.north west) {c)};
\end{tikzpicture}}

\def\figidbasic{
\begin{tikzpicture}[
	apparatus/.style={draw=cornflowerblue!40!black,thick,fill=lightcfblue,minimum height=16mm,minimum width=10mm},
	measurement/.style={draw,fill=blue!20}
	]
	\node[apparatus] (app) at (0,0) {\Large $\mathcal A_X$};
	\draw[very thick] (app.west) -- node[below] {$S$} ++(-0.5,0);
	\draw ([yshift=5.5mm]app.east) -- ++(0.6,0);
	\draw ([yshift=4.5mm]app.east) -- node[below] {$R_X$} ++(0.6,0);
	\draw[very thick] ([yshift=-5mm]app.east) -- node[below] {$S'$} ++(0.6,0);
	\end{tikzpicture}
}

\def\figidXmeas{
\begin{tikzpicture}[scale=0.75,transform shape,
	apparatus/.style={draw=cornflowerblue!40!black,thick,fill=lightcfblue,minimum height=16mm,minimum width=10mm},
	measurement/.style={thick,draw=cornflowerblue!40!black,fill=lightcfblue,minimum size=10mm}
	]
	\node[apparatus] (app) at (0,0) {\Large $\mathcal A_X$};
	\draw[very thick] (app.west) -- node[below] {$S$} ++(-0.5,0);
	\draw ([yshift=5.5mm]app.east) -- ++(0.6,0);
	\draw ([yshift=4.5mm]app.east) -- node[below] {$R_X$} ++(0.6,0);
	\draw[very thick,semitransparent] ([yshift=-5mm]app.east) -- node[below] {$S'$} ++(0.6,0);
	
	\node at (1.7,0) (approx) {\Large $\approx$};
	\node at ([xshift=0.5mm]approx.south east) {$\eps_X$};
	
	\node[measurement] (ideal) at (3.4,0) {\Large $\mathcal Q_X$};
	\draw[very thick] (ideal.west) -- node[below] {$S$} ++(-0.5,0);
	\draw ([yshift=0.5mm]ideal.east) -- ++(0.6,0);
	\draw ([yshift=-0.5mm]ideal.east) -- node[below] {$R_X$} ++(0.6,0);
\end{tikzpicture}
}

\def\figidZmeas{
\begin{tikzpicture}[scale=0.75,transform shape,
	apparatus/.style={thick,draw=cornflowerblue!40!black,fill=lightcfblue,minimum height=16mm,minimum width=10mm},
	constant/.style={thick,draw=black!50,fill=lightgray!70,minimum height=16mm,minimum width=10mm},
	measurement/.style={draw=lightcoral!60!black,thick,fill=lightcoral!90,minimum height=10mm,minimum width=10mm}
	]
	\node[apparatus] (app) at (0,0) {\Large $\mathcal A_X$};
	\node[measurement] (pinch) at (-1.4,0) {\Large $\mathcal Q_Z^\natural$};
	\draw[very thick] (pinch) -- (app);
	\draw[very thick] (pinch.west) -- node[below] {$S$} ++(-0.5,0);
	\draw ([yshift=5.5mm]app.east) -- ++(0.6,0);
	\draw ([yshift=4.5mm]app.east) -- node[below] {$R_X$} ++(0.6,0);
	\draw[very thick] ([yshift=-5mm]app.east) -- node[below] {$S'$} ++(0.6,0);
	
	\node at (1.7,0) (approx) {\Large $\approx$};
	\node at ([xshift=0.5mm]approx.south east) {$\eta_Z$};
	
	\node[constant] (const) at (3.4,0) {\Large $\mathcal C$};
	\draw[very thick] (const.west) -- node[below] {$S$} ++(-0.5,0);
	\draw ([yshift=5.5mm]const.east) -- ++(0.6,0);
	\draw ([yshift=4.5mm]const.east) -- node[below] {$R_X$} ++(0.6,0);
	\draw[very thick] ([yshift=-5mm]const.east) -- node[below] {$S'$} ++(0.6,0);
\end{tikzpicture}
}

\def\figidall{
\begin{tikzpicture}
	\node (basic) at (0,0) {\figidbasic};
	\node at (basic.north west) {a)};
	\node (jmX) at (4.20,1) {\figidXmeas};
	\node at (jmX.north west) {b)};
	\node (jmZ) at (4.20,-1) {\figidZmeas};
	\node at (jmZ.north west) {c)};
\end{tikzpicture}
}

\begin{document}

\title{Operationally-Motivated Uncertainty Relations for Joint Measurability and the Error-Disturbance Tradeoff}
\author{Joseph M.\ Renes}
\author{Volkher B.\ Scholz}
\affiliation{Institute for Theoretical Physics, ETH Zurich, Wolfgang-Pauli-Str.\ 27, 8093 Zurich, Switzerland}

\begin{abstract}
	We derive new Heisenberg-type uncertainty relations for both joint measurability and the error-disturbance tradeoff for arbitrary observables of finite-dimensional systems. The relations are 
	formulated in terms of a directly operational quantity, namely the probability of distinguishing the actual operation of a device from its hypothetical ideal, by any possible testing procedure whatsoever. 
	Moreover, they may be directly applied in information processing settings, for example to infer that devices which can faithfully transmit information regarding one observable do not leak any information about conjugate observables to the environment. Though intuitively apparent from Heisenberg's original arguments, only more limited versions of this statement have previously been formalized.
\end{abstract}

\maketitle

\section{Introduction}
It is no overstatement to say that the uncertainty principle is a cornerstone of our understanding of quantum mechanics, 
clearly marking the departure of quantum physics from the world of classical physics. 
Heisenberg's original formulation in 1927 mentions two facets to the principle. 
The first restricts the joint measurability of observables, stating that noncommuting observables can only be simultaneously determined with a characteristic amount of indeterminacy~\cite[p.\ 172]{heisenberg_uber_1927} (see \cite[p.\ 62]{wheeler_quantum_1984} for an English translation).
The second describes an error-disturbance tradeoff, noting that the more precise a measurement of one observable is made, the greater the disturbance to noncommuting observables~\cite[p.\ 175]{heisenberg_uber_1927} (\cite[p.\ 64]{wheeler_quantum_1984}). 
The two are of course closely related, and Heisenberg argues for the former on the basis of the latter. 

Precise formal statements corresponding to these two facets of the uncertainty principle were constructed only much later, 
due to the lack of precise mathematical descriptions of measurement and the measurement process in quantum mechanics. Here we must be careful to draw a distinction between statements addressing Heisenberg's original notions of uncertainty from those, like the standard Robertson uncertainty relation~\cite{robertson_uncertainty_1929}, which address the impossibility of finding a quantum state with well-defined values for noncommuting observables.   
Joint measurability has a longer history, going back at least to the seminal work of Arthurs and Kelly~\cite{arthurs_simultaneous_1965} and continuing in~\cite{she_simultaneous_1966,davies_quantum_1976,ali_systems_1977,prugovecki_fuzzy_1977,busch_indeterminacy_1985,busch_unsharp_1986,arthurs_quantum_1988,martens_towards_1991,ishikawa_uncertainty_1991,raymer_uncertainty_1994,leonhardt_uncertainty_1995,appleby_concept_1998,hall_prior_2004,werner_uncertainty_2004,ozawa_uncertainty_2004,watanabe_uncertainty_2011,busch_proof_2013,busch_heisenberg_2013,busch_measurement_2013}. %
Quantitative error-disturbance relations themselves have only been formulated relatively recently, going back at least to  Braginsky and Khalili~\cite[Chap.\ 5]{braginsky_quantum_1992} and continuing in~\cite{martens_disturbance_1992,appleby_concept_1998,ozawa_universally_2003,watanabe_quantum_2011,branciard_error-tradeoff_2013,buscemi_noise_2013,ipsen_error-disturbance_2013,coles_entropic_2013}. 

One motivation for finding formal uncertainty relations is to delineate the scope and validity of the uncertainty principle. Understanding the fundamental principles of any physical theory is an important endeavor, perhaps doubly so for quantum theory, whose conception of Nature differs so drastically from that used in classical mechanics, not to speak of everyday experience. 
Our motivation in this article is more operational, however. 
Here we are interested in finding Heisenberg-type uncertainty relations for joint measurability and error-disturbance that are useful for characterizing and analyzing (quantum) information processing tasks and are formulated in terms of quantities which are immediately relevant in such settings. 
Indeed, entropic uncertainty relations addressing state preparation~\cite{berta_uncertainty_2010,tomamichel_uncertainty_2011,berta_continuous_2013} have already been used to this end, in ensuring the security of quantum key distribution~\cite{tomamichel_tight_2012,furrer_continuous_2012}. 
The foundational and operational motivations are not completely distinct, as concrete information processing settings challenge us to find specific formalizations of the uncertainty principle.  

In this article we take a directly operational approach by quantifying error and disturbance in terms of the probability that the actual behavior of the apparatus can be distinguished from a relevant hypothetical behavior, in any experiment whatsoever.
We find new  uncertainty relations for both joint measurability (Figure~\ref{fig:jm}; Theorem~\ref{thm:jmvbs}) and the error-disturbance tradeoff (Figure~\ref{fig:ed}; Theorem~\ref{thm:ed}) of two arbitrary observables of discrete quantum systems. Our relations address the characteristics of measurement devices themselves, as opposed to entire experimental setups, and can be used in the analysis of quantum information processing tasks. Ultimately, all uncertainty relations spring from the same source, the requirement that the measurement process itself be treated as dynamical process according to the laws of quantum mechanics. The relations 
presented here are both relatively simple consequences of a basic structure theorem on quantum dynamics, the continuity of the Stinespring representation~\cite{kretschmann_continuity_2008,kretschmann_information-disturbance_2008}.

We quantify the error made by an apparatus in measuring an observable by the extent to which the apparatus can be distinguished from the \emph{ideal measurement} in  any possible experiment. 
Our uncertainty relation for joint measurability then relates the errors for each observable to a measure of the observables' incompatibility and implies both errors cannot simultaneously be small when the observables are incompatible. 
On the other hand, we quantify disturbance to an observable by how well the apparatus mimics one that produces a \emph{fixed output} when acting on states having well-defined values (eigenstates) of that observable. 
Our error-disturbance tradeoff then relates the error associated with measurement of one observable to the disturbance caused to the other and again implies that both cannot be small when the observables are incompatible.

As mentioned above, entropic state-preparation uncertainty relations haven proven useful in establishing the security of quantum key distribution. 
Our error-disturbance relation allows us to make a stronger statement, one useful in more general cryptographic scenarios beyond creation of secret keys: If a quantum system is subject to any kind of interaction with some external degrees of freedom which nevertheless still allows an experimenter to perform an approximately faithful measurement of a given observable, then the interaction approximately leaks no information about inputs which are eigenstates of a complementary observable. Thus, by appropriately examining the quantum state before and after the interaction, we can infer whether or not any information about the second observable has leaked to the external degrees of freedom. 
This notion is already present in Heisenberg's original arguments on an intuitive level, but directly operational versions have only been previously formalized for special cases. It can be used to construct leakage-resilient classical computers from fault-tolerant quantum computers~\cite{lacerda}. 

We have organized our results as follows. In the next section we define the distinguishability quantity and provide some background to the mathematical setting of the problem. We then present the joint-measurability and error-disturbance relations, whose proofs are deferred to the Methods section, Sec.~\ref{sec:methods}. In Sec.~\ref{sec:app} we discuss the applications to quantum information processing in more detail. Finally, we conclude in Sec.~\ref{sec:outlook} with a discussion of open questions raised by this work, in particular how our results could be extended to continuous-variable systems, and a comparison of our results with previous work. 

\section{Main Results}
\subsection{Background}
When working in the Schr\"odinger picture, any apparatus is described in the formalism of quantum theory by a completely positive, trace preserving operation, or \emph{quantum channel}~\cite{kraus_states_1983,nielsen_quantum_2000}. The channel $\cE$ maps states in the input state space, $\sS(\cH_A)$, to states in the output state space, $\sS(\cH_B)$. Here $\cH_A$ and $\cH_B$ are Hilbert spaces and $\sS(\cH)$ the set of bounded operators acting on $\cH$ and having unit trace. 

According to the Stinespring representation theorem~\cite{stinespring_positive_1955,paulsen_completely_2003}, any quantum channel $\cE:\sS(\cH_A)\to \sS(\cH_B)$ can be expressed in terms of an isometry $V:\cH_A\to \cH_B\otimes \cH_E$ involving an additional system $\cH_E$ as 
\begin{align}
\label{eq:stinespring}
	\cE(\rho)= \tr_E[V\rho V^\dagger],
\end{align}
for any $\rho\in\sS(\cH_A)$.
The extra system can be regarded as the \emph{environment} of the channel action, the additional degrees of freedom required to describe the dynamics of the pair by the Heisenberg (or Schr\"odinger) equation. The isometry is however not unique, but all possible Stinespring isometries are related by further isometries involving only the environmental degrees of freedom. 

Any particular Stinespring isometry naturally induces another channel, the complementary channel of $\cE$, denoted by $\cE^\sharp$, which maps $\sS(\cH_A)$ to $\sS(\cH_E)$ according to 
\begin{align}
	\label{eq:complementarychannel}
	\cE^\sharp(\rho)=\tr_B[V\rho V^\dagger].
\end{align} 
The main technical ingredient required for our results is the continuity of the Stinespring representation~\cite{kretschmann_continuity_2008,kretschmann_information-disturbance_2008}. This states that channels which are close (as measured by a particular norm) have Stinespring isometries which are also close. For the formal statement, see Theorem~\ref{thm:sc}.

The error and disturbance measures used here are 
formulated in terms of the probability $p_{\rm dist}(\cE,\cE')$ that one can distinguish the operation of one apparatus $\cE$ from another $\cE'$ in any test whatsoever, when the two are chosen with equal \emph{a priori} probability. Since this probability ranges from $\frac12$ (we can always just make a random guess) to $1$, it is more convenient to consider the distinguishability measure 
\begin{align}\delta(\cE,\cE'):=2p_{\rm dist}(\cE,\cE')-1,
\end{align}
which ranges from zero (completely indistinguishable) to one (completely distinguishable). 
Fortunately, the distinguishability is directly related to the norm used in the continuity of the Stinespring representation (see \eqref{eq:epsdefdn}). 
Nonetheless, the operational definition in terms of $p_{\rm dist}$ is sufficient to state our uncertainty relations, and we defer the more detailed presentation of $\delta(\cE,\cE')$ to Sec.~\ref{sec:methods}.

\subsection{Joint Measurability}
\begin{figure}[th]
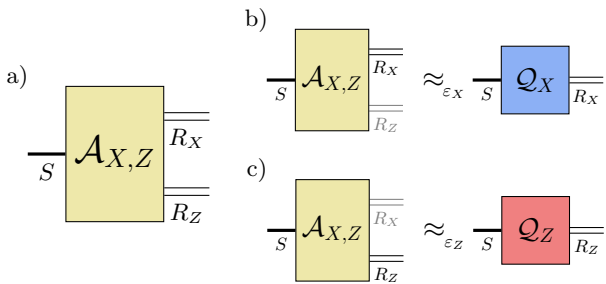

	\figjmall
	\caption{\label{fig:jm} a) An apparatus $\cA_{X,Z}$ designed to jointly measure two observables $X$ and $Z$ on a system $S$. It produces two results, the classical registers $R_X$ and $R_Z$. Ignoring either outcome amounts to nondeal measurement $\cM_X$ of $X$ and $\cM_Z$ of $Z$, shown in b) and c). The extent to which $\cM_X$ and $\cM_Z$ fail to simulate ideal measurement $\cQ_X$ of the observable $X$ is denoted by the error $\eps_X$ (defined in \eqref{eq:errordef}) and similarly $\eps_Z$ for the observable $Z$. The two errors are constrained by the joint measurability uncertainty relation \eqref{eq:jmur}.}  
\end{figure}
Let us now consider the question of joint measurability in more detail.  
As depicted in Figure~\ref{fig:jm},  joint measurability of two observables $X$ and $Z$ is naturally  concerned with how well a single apparatus $\cA_{X,Z}$ can simultaneously approximate both \emph{ideal measurements}, call them $\cQ_X$ and $\cQ_Z$. Any such device has of course two classical outputs, one for each observable, which we denote by $R_X$ and $R_Z$.  
The actual measurement $\mathcal M_X$ of $X$ made by the apparatus only takes the $R_X$ outcome into account, and similarly for $\cM_Z$. 
Then, we are specifically interested in the two types of \emph{error} inherent to  the apparatus, 
\begin{align}
\label{eq:errordef}
	\eps_X(\cA_{X,Z}) &:= \delta(\cM_X,\cQ_X) \quad \text{and}\\
	\eps_Z(\cA_{X,Z}) &:= \delta(\cM_Z,\cQ_Z).
\end{align}

We expect that, for incompatible or complementary observables, these quantities cannot both be small. In finite dimensions, we may quantify the complementarity of $X$ and $Z$ in terms of their eigenstates $\ket{\varphi_x}$ and $\ket{\vartheta_z}$, as follows. Letting $r(X;Z):=\tfrac1{\sqrt 2}\left(1-\min_x\max_z |\langle\varphi_x|\vartheta_z\rangle|^2\right)$, the  measure of complementarity is 
\begin{align}
	\label{eq:cjmdef}
	c_1(X,Z):=\max\{r(X;Z),r(Z;X)\}.
\end{align}
Then we have the following uncertainty relation,
\begin{theorem}[Joint Measurability]\label{thm:jmvbs}
	For any apparatus $\cA_{X,Z}$ which attempts to jointly measure two finite-dimensional observables $X$ and $Z$, 
\begin{align}
\label{eq:jmur}
	\eps_X(\cA_{X,Z})^{\half}+\eps_Z(\cA_{X,Z})^\half\geq c_1(X,Z).
\end{align}
\end{theorem}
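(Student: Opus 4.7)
My plan is to use the continuity of the Stinespring representation (Theorem~\ref{thm:sc}) to convert the two closeness hypotheses $\delta(\cM_X,\cQ_X)=\eps_X$ and $\delta(\cM_Z,\cQ_Z)=\eps_Z$ into nearby isometric dilations, and then exploit the rigid projective structure of the ideal measurements to derive the incompatibility bound.

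First, I would pick a Stinespring dilation $V_\cA:\cH_S\to\cH_{R_X}\otimes\cH_{R_Z}\otimes\cH_E$ of the apparatus $\cA_{X,Z}$, which simultaneously dilates $\cM_X$ (with environment $\cH_{R_Z}\otimes\cH_E$) and $\cM_Z$ (with environment $\cH_{R_X}\otimes\cH_E$). Two applications of Theorem~\ref{thm:sc} then produce isometries $V_X,V_Z:\cH_S\to\cH_{R_X}\otimes\cH_{R_Z}\otimes\cH_E$ dilating $\cQ_X$ and $\cQ_Z$ respectively, with $\|V_\cA-V_X\|_\infty\lesssim\sqrt{\eps_X}$ and $\|V_\cA-V_Z\|_\infty\lesssim\sqrt{\eps_Z}$. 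Since $\cQ_X$ and $\cQ_Z$ are projective measurements, these isometries must act as $V_X|\varphi_x\rangle=|x\rangle_{R_X}\otimes|F_x\rangle$ and $V_Z|\vartheta_z\rangle=|z\rangle_{R_Z}\otimes|G_z\rangle$ with $\{|F_x\rangle\}\subset\cH_{R_Z}\otimes\cH_E$ and $\{|G_z\rangle\}\subset\cH_{R_X}\otimes\cH_E$ orthonormal; in particular, the complementary channel $\cQ_X^\sharp=\tr_{R_X}V_X(\cdot)V_X^\dagger$ sends every $X$-eigenstate input to a pure state on $\cH_{R_Z}\otimes\cH_E$.

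Chaining the two Stinespring estimates via the triangle inequality at the isometry level, and using the standard fact that complementary channels of close dilations are close in diamond norm, would give, for any fixed pure ancilla state $\ketbra{\omega}$ on $\cH_E$,
\begin{align*}
\delta\bigl(\cQ_X^\sharp,\;\cQ_Z\otimes\ketbra{\omega}\bigr)\;\lesssim\;\sqrt{\eps_X}+\sqrt{\eps_Z}.
\end{align*}
To bound the same quantity from below, I would choose $x^{*}$ attaining $M:=\min_x\max_z|\langle\varphi_x|\vartheta_z\rangle|^2$ and evaluate at the input $|\varphi_{x^{*}}\rangle\langle\varphi_{x^{*}}|$. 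The channel $\cQ_X^\sharp$ produces the pure state $|F_{x^{*}}\rangle\langle F_{x^{*}}|$, while $\cQ_Z\otimes\ketbra{\omega}$ produces $\sum_z p_z|z\rangle\langle z|\otimes\ketbra{\omega}$ with $p_z=|\langle\vartheta_z|\varphi_{x^{*}}\rangle|^2$ and largest eigenvalue $M$. The elementary observation that the trace distance between a pure state and any density operator is at least one minus the operator's largest eigenvalue then yields
\begin{align*}
\delta\bigl(\cQ_X^\sharp,\;\cQ_Z\otimes\ketbra{\omega}\bigr)\;\geq\;1-M\;=\;\sqrt 2\,r(X;Z).
\end{align*}
Combining the two displayed inequalities and absorbing constants gives $\sqrt{\eps_X}+\sqrt{\eps_Z}\geq r(X;Z)$; repeating the argument with the roles of $X$ and $Z$ interchanged produces the analogous estimate with $r(Z;X)$, and taking the larger of the two yields the stated bound involving $c_1(X,Z)$.

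The main technical obstacle I anticipate is the bookkeeping around the extra environment $\cH_E$ that is needed when $\cA_{X,Z}$ is not itself an isometry. In that case $\cH_E$ may be strictly larger than $\cH_{R_Z}$, so the comparison of complementary channels with $\cQ_Z$ must be mediated by tensoring the latter with an arbitrary fixed pure state on $\cH_E$, and the constants coming from Theorem~\ref{thm:sc} must be tracked carefully to ensure that the two error contributions combine additively as $\sqrt{\eps_X}+\sqrt{\eps_Z}$ rather than in some weaker aggregate form.
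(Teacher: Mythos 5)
Your opening moves (a common dilation of $\cM_X$ and $\cM_Z$, two applications of Theorem~\ref{thm:sc}, and a triangle inequality at the isometry level) coincide with the paper's proof, but the step where you identify the comparison channel is a genuine gap. From $\opnorm{V_\cA-V_X}\lesssim\sqrt{\eps_X}$ and $\opnorm{V_\cA-V_Z}\lesssim\sqrt{\eps_Z}$ you may legitimately conclude that $\cQ_X^\sharp=\tr_{R_X}[V_X(\cdot)V_X^\dagger]$ is close to $\cN_Z:=\tr_{R_X}[V_Z(\cdot)V_Z^\dagger]$, but $\cN_Z$ is \emph{not} of the form $\cQ_Z\otimes\ketbra{\omega}$ for any fixed pure $\omega$ on $\cH_E$: writing $V_Z\ket{\vartheta_z}=\ket{z}_{R_Z}\otimes\ket{G_z}$, one has $\cN_Z(\rho)=\sum_{z,z'}\bra{\vartheta_z}\rho\ket{\vartheta_{z'}}\,\ket{z}\bra{z'}\otimes\tr_{R_X}[\ket{G_z}\bra{G_{z'}}]$, an apparatus-dependent channel whose environment output can carry $z$-dependent, even coherent, information. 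Concretely, take $X=Z$ in the standard basis and the apparatus that copies the outcome into $E$, i.e.\ $V_\cA\ket{x}=\ket{x}_{R_X}\ket{x}_{R_Z}\ket{x}_E$: then $\eps_X=\eps_Z=0$, yet $\delta(\cQ_X^\sharp,\cQ_Z\otimes\ketbra{\omega})\geq 1-\tfrac1d\geq\tfrac12$ for \emph{every} fixed pure $\omega$, so your displayed inequality $\delta(\cQ_X^\sharp,\cQ_Z\otimes\ketbra{\omega})\lesssim\sqrt{\eps_X}+\sqrt{\eps_Z}$ is false. The issue you flag at the end as ``bookkeeping around $\cH_E$'' is thus not bookkeeping; it is exactly where the argument breaks.

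Moreover, even after replacing $\cQ_Z\otimes\ketbra{\omega}$ by the correct channel $\cN_Z$, your final purity-based lower bound at the single input $\ketbra{\varphi_{x^*}}$ no longer goes through: if, say, $\ket{G_z}=\ket{g}_{R_X}\otimes\ket{e_z}_E$ with orthonormal $\ket{e_z}$, then $\cN_Z(\ketbra{\varphi_{x^*}})$ is itself pure and may coincide with $\ketbra{F_{x^*}}$, so there is no apparatus-independent gap of size $1-\min_x\max_z|\langle\varphi_x|\vartheta_z\rangle|^2$ at that input. The paper avoids both problems by comparing channels on the small classical output $\hat X$ rather than on the environment side: it shows $\cQ_X$ is close to $\cE_Z(\rho)=\tr_{R\hat X'}[U_X\cQ_Z^\natural(\rho)U_X^\dagger]$, a channel which by construction factors through the $Z$ pinch, and then lower-bounds $\tfrac12\dnorm{\cQ_X-\cE_Z}$ uniformly over all apparatuses using the primal SDP form~\eqref{eq:primal} with the feasible point $Y=\ketbra{x}\otimes\ketbra{\varphi_x}$ and the POVM constraint $\sum_x\Lambda_x=\id$. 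To salvage your environment-side route you would need an analogous structural lower bound on $\delta(\cQ_X^\sharp,\cN_Z)$ valid for every choice of the orthonormal families $\ket{F_x}$ and $\ket{G_z}$; that uniform bound is precisely the missing content of your proposal.
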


The full proof is given in the Methods section, but we can sketch the main idea here. Since $\cM_X$ and $\cM_Z$ are defined from the same apparatus, they share a Stinespring isometry, say $V$. This isometry is close to appropriate isometries $W_X$ and $W_Z$ for $\cQ_X$ and $\cQ_Z$ as measured by $\eps_X$ and $\eps_Z$, respectively. By the triangle inequality for the isometry distance, we now have a relation for the distance between $W_X$ and $W_Z$,  which can be evaluated by making use of properties of the ideal measurements.

\subsection{Error-Disturbance Tradeoff}
\begin{figure}[th]
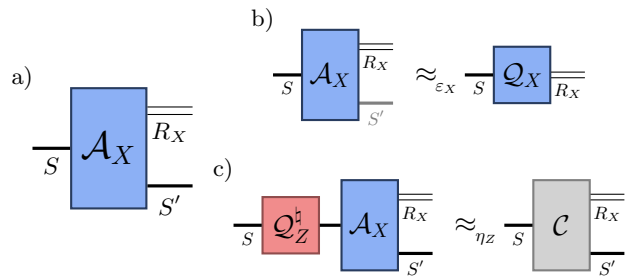

	\figidall 
	\caption{\label{fig:ed} a) An apparatus designed to extract information about the observable $X$ on a system $S$. The device produces two outputs: $R_X$, the classical register carrying information about $X$, and $S'$, the transformed quantum system. b) The error of the $X$ measurement is quantified by $\eps_X$, the extent to which $\cA_X$ approximates an ideal measurement $\cQ_X$. c) The disturbance $\eta_Z$ is quantified by how well the compound channel $\cA_X\circ \cQ_Z^\natural$ approximates a channel $\cC$ with a constant output (see~\eqref{eq:disturbdef}); here $\cQ_Z^\natural$ is an ideal non-selective measurement of $Z$. The error and disturbance are constrained by the uncertainty relation~\eqref{eq:ed}.}
\end{figure}
Next we turn to the tradeoff between the approximation error of a given apparatus $\cA_X$ for measuring observable $X$ and the disturbance caused to the observable $Z$. The setup is depicted in Figure~\ref{fig:ed}. Again $\cA_X$ produces the classical result in $R_X$, and the approximation error $\eps_X(\cA_X)$ is precisely the same as defined in the previous section.
Now we are also interested in the system $S'$ after the action of $\cA_X$, in particular the observable $Z$. 

One measure of disturbance to $Z$, natural in the Heisenberg picture where the apparatus changes observables on the system, not its state, is how closely the apparatus transforms 
$Z$ into (some multiple of) the identity operator. This way, any measurement of $Z$ after the action of the apparatus has nothing whatsoever to do with any properties of $Z$ which might have been present beforehand. 
But here we are after more: The disturbance to $Z$ should also hold  conditional on the measurement outcome in $R_X$. That is, it should not be possible to perform some subsequent ``recovery'' operation conditional on the measurement outcome which restores the $Z$ observable. This stronger notion of disturbance was used recently in~\cite{buscemi_noise_2013}.

To formulate a definition of disturbance that addresses this issue yet does explicitly include an optimization over recovery maps requires a little care, particularly in the Heisenberg picture. We first give the logic leading up to our definition for finite-dimensional systems in the Schr\"odinger picture, and then remark on a natural interpretation in the Heisenberg picture. We give both arguments, as the use of the Heisenberg picture is more convenient when considering infinite-dimensional systems.

For finite-dimensional systems, disturbance to $Z$ in the Schr\"odinger picture amounts to its eigenstates all being mapped to a fixed output. In the worst case, this is true even when conditioning on the classical outcome of the $\cA_X$ apparatus. Therefore, our measure of disturbance is how well the action of $\cA_X$ approximates a channel with a constant output, when both are input with eigenstates of $Z$.  
To ensure that all inputs to $\cA_X$ are diagonal in the $Z$ basis, we may first perform the ideal non-selective measurement $\cQ_Z^\natural$, which measures the the state in the $Z$ basis and discards the result. The post-measurement state is necessarily diagonal in the $Z$ basis, and the map $\cQ_Z^\natural$ is a particular complement of the measurement $\cQ_Z$. Therefore, the disturbance is large if the map $\cA_X\circ \cQ_Z^\natural$ is close to a map $\cC$ which has constant output, say $\sigma$, for any input state $\rho$.  We are thus led to a disturbance measure of the form
 \begin{align}
	\label{eq:disturbdef}
	\eta_Z(\mathcal A_X):=1-\min_{\cC}\delta(\cA_X\circ \cQ_Z^\natural,\cC),
\end{align}
since a better approximation means greater disturbance. 

To motivate this definition in the Heisenberg picture, notice that the non-selective measurement $\cQ_Z^\natural$ has no effect on the $Z$ observable itself.  
Then, to the extent that $X$ and $Z$ are incompatible, $\cQ_Z^\natural$ followed by $\cA_X$ should completely scramble \emph{all} observables of the system. 
Indeed, this behavior is measured by \eqref{eq:disturbdef}, since the action of $\cC$ on observables is given by $\cC'(A)=\tr[A\sigma]\id$ for the same $\sigma\in \sS(\cH)$ and $A\in \sB(\cH)$, where $\sB(\cH)$ is the set of bounded operators.

As with joint measurement, we expect that both $\eps_X(\cA_X)$ and $\eta_Z(\cA_X)$ cannot both be small if $X$ and $Z$ are incompatible. For finite-dimensional observables we again measure complementarity in terms of the eigenvectors, but this time by the function 
\begin{align}
	c_2(X;Z):=1-\max_z\sum_x \{\tfrac1d-|\langle\varphi_x|\vartheta_z\rangle|^2\}_+,
\end{align}   
where $\{x\}_+=\max\{x,0\}$ and $d={\rm dim}(\cH_S)$. 
Then we have the following uncertainty relation, 
\begin{theorem}[Error-Disturbance Tradeoff]
	\label{thm:ed}
	For finite-dimensional observables $X$ and $Z$, any apparatus $\cA_X$ which attempts to gain information about observable $X$ satisfies 
	\begin{align}
		\label{eq:ed}
		\sqrt{2}\,\eps_X(\cA_X)^\half+\eta_Z(\cA_X)\geq c_2(X;Z).
	\end{align}
\end{theorem}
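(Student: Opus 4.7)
My plan mirrors the approach to Theorem~\ref{thm:jmvbs}: use Stinespring continuity to construct a surrogate channel of ideal $X$-measurement form that is close to $\cA_X$, compute its disturbance directly, and chain the two estimates with a triangle inequality. Fix a Stinespring isometry $V:\cH_S\to \cH_{R_X}\otimes \cH_{S'}\otimes \cH_E$ of $\cA_X$ and the canonical one $W=\sum_x |x\rangle_{R_X}|\varphi_x\rangle_{\tilde E}\langle\varphi_x|_S$ for $\cQ_X$. Both are dilations of channels producing only $R_X$ (with environments $\cH_{S'}\otimes\cH_E$ and $\cH_{\tilde E}$), so from $\delta(\tr_{S'}\cA_X,\cQ_X)=\eps_X$ Stinespring continuity provides an isometry $U:\cH_{\tilde E}\to\cH_{S'}\otimes\cH_E$ with $\opnorm{V-(\id_{R_X}\otimes U)W}\le \sqrt{2\eps_X}$. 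Writing $V_\star:=(\id_{R_X}\otimes U)W$ and $\cA_X^\star(\rho):=\tr_E V_\star \rho V_\star^\dagger$, this implies $\delta(\cA_X,\cA_X^\star)\le \sqrt{2\eps_X}$. Because $R_X$ is classical at the output of $\cA_X$, the pinching $\mathcal{D}$ in the $R_X$ basis fixes $\cA_X$, so for $\cA_X^{\star\star}:=\mathcal{D}\circ \cA_X^\star$ data processing still yields $\delta(\cA_X,\cA_X^{\star\star})\le \sqrt{2\eps_X}$; an explicit computation reveals the ideal form
\begin{align*}
\cA_X^{\star\star}(\rho)=\sum_x |x\rangle\langle x|_{R_X}\otimes \tau_x\,\langle\varphi_x|\rho|\varphi_x\rangle,
\end{align*}
with $\tau_x:=\tr_E[U|\varphi_x\rangle\langle\varphi_x|U^\dagger]$.

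To bound $\eta_Z(\cA_X^{\star\star})$ I would exhibit the constant channel $\cC^*$ with output $\sigma^*:=\cA_X^{\star\star}(\id/d)=\tfrac1d\sum_x|x\rangle\langle x|\otimes\tau_x$. Since both $\cA_X^{\star\star}\circ \cQ_Z^\natural$ and $\cC^*$ first project the input onto the $Z$-basis via $\cQ_Z^\natural$, any reference system is decoupled, so $\dnorm{\cdot}$ is attained on unentangled inputs and convexity reduces the search to the pure eigenstates $|\vartheta_z\rangle\langle\vartheta_z|$. The mutual orthogonality of the blocks $|x\rangle\langle x|\otimes\tau_x$ then yields the elementary total-variation estimate
\begin{align*}
\delta(\cA_X^{\star\star}\circ \cQ_Z^\natural,\cC^*)=\max_z\sum_x\{\tfrac1d-|\langle\varphi_x|\vartheta_z\rangle|^2\}_+=1-c_2(X;Z).
\end{align*}

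Combining the two ingredients through the triangle inequality for $\delta$ and data processing,
\begin{align*}
\min_{\cC}\delta(\cA_X\circ \cQ_Z^\natural,\cC)&\le \delta(\cA_X,\cA_X^{\star\star})+\min_{\cC}\delta(\cA_X^{\star\star}\circ \cQ_Z^\natural,\cC)\\
&\le \sqrt{2\eps_X}+(1-c_2(X;Z)),
\end{align*}
and subtracting from $1$ proves the theorem. The main obstacle is singling out the correct surrogate: the raw Stinespring-close channel $\cA_X^\star$ is not automatically $R_X$-classical, and its disturbance calculation is complicated by off-diagonal $R_X$-blocks that make the trace-norm estimate unwieldy. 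One must therefore pinch $\cA_X^\star$ in the $R_X$ basis and use the classicality of $\cA_X$'s output to absorb the pinching step into the Stinespring estimate, so that the disturbance calculation collapses to the elementary total-variation inequality encoded in the definition of $c_2$.
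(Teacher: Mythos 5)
Your proposal is correct and follows essentially the same route as the paper: Stinespring continuity produces a measure-and-prepare surrogate for $\cA_X$ (the content of the paper's Theorem~\ref{thm:measprep}), the triangle inequality plus monotonicity reduce the problem to the distance between $\cQ_X\circ\cQ_Z^\natural$ and a uniform constant output, and that distance evaluates to $\max_z\sum_x\{\tfrac1d-|\langle\varphi_x|\vartheta_z\rangle|^2\}_+=1-c_2(X;Z)$. The only cosmetic differences are that you obtain the measure-and-prepare form by pinching in the $R_X$ basis (using classicality of the output) rather than via the doubled classical register in the dilation, and you compute the final classical total-variation bound directly instead of through the dual semidefinite program~\eqref{eq:dual}.
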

Again we give a brief sketch of the proof, which is detailed in the Methods section. The first step is to establish an intermediate result, which states any complement of a quantum channel which is close to a measurement $\cQ_X$ is itself close to the same measurement, possibly followed by preparation of a new quantum state conditioned on the measurement outcome. Then we consider the channel formed by preceding such a ``measure-prepare'' channel with the non-selective measurement in the $Z$ basis, $\cQ_Z^\natural$. Finally, the indistinguishability of the resulting joint channel from a constant-output channel turns on how close $\cQ_X\circ\cQ_Z^\natural$ is to a measurement with a fixed output distribution.  

The error-disturbance bound implies that when the error in $X$ is small, then the disturbance to $Z$ must be large relative to $c_2(X;Z)$. 
However, the opposite conclusion---low disturbance implies high error---does not follow from the bound, for two reasons. 
First, the disturbance quantity never quite reaches zero, since it is never possible to perfectly distinguish $\cA_X\circ \cQ_Z^\natural$ from a constant-output map $\cC$.
Second, even if $\eta_Z$ were zero and the observables conjugate so that $c_2(X;Z)=1$, $\eps_X$ would still only necessarily be at least $\frac 12$.

\section{Applications in Quantum Information Processing}
\label{sec:app}
A useful tool in the construction of quantum information processing protocols is the link between reliable transmission of $X$ eigenstates through a channel $\cN$ and $Z$ eigenstates through its complement $\cN^\sharp$, particularly when the observables $X$ and $Z$ are maximally complementary, i.e.\ $|\langle \varphi_x|\vartheta_z\rangle|^2=\frac 1d$ for all $x,z$. Due to the uncertainty principle, we expect that a channel cannot reliably transmit the bases to different outputs, since this would provide a means to simultaneously measure $X$ and $Z$. 
This link has been used by Shor and Preskill to prove the security of quantum key distribution~\cite{shor_simple_2000} and by Devetak to determine the quantum channel capacity~\cite{devetak_private_2005}.
Entropic state-preparation uncertainty relations from \cite{berta_uncertainty_2010,tomamichel_uncertainty_2011} can be used to understand both results, as shown in~\cite{renes_duality_2011,renes_physics_2012}. 

However, the above approach has the serious drawback that it can only be used in cases where the specific $X$-basis transmission over $\cN$ and $Z$-basis transmission over $\cN^\sharp$ are in some sense compatible and not \emph{counterfactual}; because the argument relies on a state-dependent uncertainty principle, both scenarios must be compatible with the same quantum state. 
Fortunately, this can be done for both QKD security and quantum capacity, because at issue is whether $X$-basis ($Z$-basis) transmission is reliable (unreliable) on average \emph{when the states are selected uniformly at random}. Choosing among either basis states at random is compatible with a random measurement in either basis of half of a maximally-entangled state, and so both $X$ and $Z$ basis scenarios are indeed compatible. The same restriction to choosing input states uniformly appears in the recent result of \cite{buscemi_noise_2013}, as it also ultimately relies on a state-preparation uncertainty relation.

Using Theorem~\ref{thm:ed} we can extend the method above to counterfactual uses of arbitrary channels $\cN$, in the following sense: If acting with the channel $\cN$ does not substantially affect the possibility of performing an $X$ measurement, then $Z$-basis inputs to $\cN^\sharp$ result in an essentially constant output. More concretely, we have 
\begin{corollary}
\label{cor:leakage}
	Given a channel $\cN$ and complementary channel $\cN^\sharp$, suppose that there exists a measurement $\Lambda_X$ such that $\delta(\cQ_X,\Lambda_X\circ \cN)\leq \eps$. Then there exists a constant channel $\cC$ such that $\delta(\cN^\sharp\circ \cQ_Z^{\natural},\cC)\leq 2\sqrt\eps+1-c_2(X;Z)$. For maximally complementary $X$ and $Z$, $\delta(\cN^\sharp\circ \cQ_Z^{\natural},\cC)\leq 2\sqrt\eps$.
\end{corollary}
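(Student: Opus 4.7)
The plan is to engineer an apparatus $\cA_X$ from the given data and feed it directly into Theorem~\ref{thm:ed}, then marginalize out the classical register to obtain the claim about $\cN^\sharp$. Concretely, fix a Stinespring isometry $V_\cN:\cH_S\to\cH_{S'}\otimes\cH_E$ realizing $\cN$, so that $\cN(\rho)=\tr_E[V_\cN\rho V_\cN^\dagger]$ and $\cN^\sharp(\rho)=\tr_{S'}[V_\cN\rho V_\cN^\dagger]$. Define $\cA_X$ on input $S$ to have two outputs: the classical register $R_X$ produced by applying $\Lambda_X$ to the $S'$ part of $V_\cN$, and the quantum system $E$, which is identified with the ``$S'$'' slot in Figure~\ref{fig:ed}.

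With this construction, the induced $X$-measurement of $\cA_X$ (ignore $E$) is exactly $\Lambda_X\circ\cN$, so the hypothesis yields $\eps_X(\cA_X)\leq\eps$, while tracing out $R_X$ from $\cA_X$ is $\cN^\sharp$ by definition. Theorem~\ref{thm:ed} then gives
\begin{align*}
\eta_Z(\cA_X)\geq c_2(X;Z)-\sqrt{2}\,\eps_X(\cA_X)^{1/2}\geq c_2(X;Z)-\sqrt{2\eps},
\end{align*}
which, upon unfolding the definition~\eqref{eq:disturbdef}, furnishes a constant-output channel $\widetilde\cC$ on $R_X\otimes S'$ with $\delta(\cA_X\circ\cQ_Z^\natural,\widetilde\cC)\leq 1-c_2(X;Z)+\sqrt{2\eps}$.

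To conclude, I would post-compose both channels with the partial trace over $R_X$. The distinguishability $\delta$ cannot increase under this fixed quantum operation (immediate from its operational definition via $p_{\rm dist}$, since any distinguishing test on the marginal is also a valid test on the joint output), and the marginal of a constant-output channel on $R_X\otimes S'$ is plainly a constant-output channel $\cC$ on $S'$ alone. The left-hand side collapses to $\cN^\sharp\circ\cQ_Z^\natural$, yielding the desired bound (after loosening the sharper $\sqrt{2\eps}$ to the stated $2\sqrt{\eps}$). The maximally complementary case is read off from $c_2(X;Z)=1$.

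I do not expect a genuine obstacle here, since the argument is essentially a packaging of Theorem~\ref{thm:ed} plus data processing. The only moderately subtle point worth checking is that the $\widetilde\cC$ delivered by Theorem~\ref{thm:ed} need not factorize across $R_X$ and $S'$; but only its $S'$-marginal is used in the final bound, so no factorization is required. Equally, the choice of Stinespring dilation for $\cN$ is immaterial beyond fixing which complement $\cN^\sharp$ we are comparing against, which is exactly what the corollary's statement allows.
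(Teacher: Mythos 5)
Your proposal is correct and follows essentially the same route as the paper: define $\cA_X=\Lambda_X\circ V_{\cN}$ for a Stinespring isometry $V_{\cN}$ whose complement is $\cN^\sharp$, apply Theorem~\ref{thm:ed} to bound $\delta(\cA_X\circ\cQ_Z^\natural,\cC)$, and then discard the classical register $R_X$ using monotonicity of $\delta$. Your added remarks (that the constant channel need not factorize, and that $\sqrt{2\eps}\leq 2\sqrt{\eps}$ accounts for the stated constant) are accurate but do not change the argument.
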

\begin{proof}
Let $V_\cN$ be the Stinespring dilation of $\cN$ such that $\cN^\sharp$ is the complementary channel and define $\cA_X=\Lambda_X\circ V_\cN$. For $\cC$ the optimal choice in the definition of $\eta_Z(\cA_X)$, \eqref{eq:ed} implies $\delta(\cA_X\circ \cQ_Z^\natural,\cC)\leq 2\sqrt \eps+1-c_2(X;Z)$. Since $\cN^\sharp$ is obtained from $\cA_X$ by ignoring the $\Lambda_X$ measurement result, $\delta(\cN^\sharp\circ \cQ_Z^\natural,\cC)\leq \delta(\cA_X\circ \cQ_Z^\natural,\cC)$.
\end{proof}

This formulation is important because in more general cryptographic and communication scenarios we are interested in the worst-case behavior of the protocol, not the average case under some particular probability distribution. For instance, in~\cite{lacerda} the goal is to construct a classical computer resilient to leakage of $Z$-basis information by establishing that reliable $X$ basis measurement is possible despite the interference of the eavesdropper. However, such an $X$ measurement is entirely counterfactual and cannot be reconciled with the actual $Z$-basis usage, as the $Z$-basis states will be chosen \emph{deterministically} in the classical computer.

\section{Outlook}
\label{sec:outlook}

A number of open questions present themselves, beyond an extension of our results to continuous variable systems, which is discussed at length below. 
First, it would be interesting to examine the optimality conditions of the semidefinite program used in the proofs of both relations to see if the bounds presented here could be improved. Both are somewhat weak in certain extreme cases: In the joint-measurability relation~\eqref{eq:jmur} even if one measurement is perfect the bound on the error of the other is only at least one half, while from the error-disturbance relation~\ref{eq:ed} one cannot conclude that low disturbance implies high error. One could also examine the tightness of either relation (at least numerically) for specific measurement devices, for instance the measurement used in the experimental tests~\cite{erhart_experimental_2012,rozema_violation_2012} of Ozawa's error-disturbance relation~\cite{ozawa_universally_2003}. Finally, one could also derive bounds on joint-measurability from the error-disturbance tradeoff itself and and see how it compares with~\eqref{eq:jmur}.

\subsection{Extension to Continuous Variables}
\label{sec:cv}

Starting from Heisenberg's seminal paper, clearly the most well-studied uncertainty relations involve the conjugate pair position and momentum. 
Hence it is desirable to extend our results to this setting as well. The two main technical tools used establish Theorems~\ref{thm:jmvbs} and~\ref{thm:ed} are the continuity of the Stinespring representation for completely positive maps (Theorem~\ref{thm:sc}, as well as formulation of the completely bounded norm as a semidefinite program involving the Choi representative of a map (Equations~\ref{eq:primal} and \ref{eq:dual}). Both results can be carried over to the case of infinite dimensional systems. The continuity theorem remains valid as stated in Theorem~\ref{thm:sc}, see~\cite{kretschmann_continuity_2008}. In addition, since we are concerned with the distinguishability of measurements, which are channels that destroy all entanglement (known as entanglement-breaking channels), the concept of Choi representatives carries over as well~\cite[Theorem 3]{owari_squeezing_2008}. 

Before applying these two tools in the infinite-dimensional setting, we must first ensure that the definitions of error and disturbance are sensible. Recall that both rest on the probability of distinguishing between two channels, maximized over all input states and observables. In infinite dimensions this includes observables of arbitrary \emph{precision}; any operationally valid distinguishability measure should however also take precision into account. For finite dimensions precision is not an issue due to the inherent discreteness of the results (infinite precision is not unphysical, in some sense).  

For the case of our error-disturbance relation, the extension to continuous variables is even more mathematically delicate, as there simply is no instrument which implements an ideal position or momentum measurement, including the post-measurement quantum state~\cite[Theorem 3.3]{davies_quantum_1976}. Instruments implementing imprecise measurements do however exist. A meaningful extension of our joint measurability result would not need to address this difficulty, since ideal position and momentum measurements themselves do exist, ignoring the post-measurement quantum state.

In the case of distinguishing measurements, as here, the observables in question are so-called test functions, which take values between 0 and 1, generalizing characteristic or indicator functions associated with a given subset of the measurement output space (the real line for position or momentum). 
One way to limit the precision of test functions would be to ensure they are ``smeared out'' over some mimimum length scale, for instance by composing all test functions with a physical noise channel. Our proof technique could potentially be adapted to include this additional step, since by using a physical noise channel one could still make use of the Stinespring representation. 

Another option would be to restrict to slowly-varying functions, that is having a bounded Lipschitz constant, which are thereby  insensitive to changes on small length scales. This approach is chosen in the work of Werner~\cite{werner_uncertainty_2004}, and the resulting measure of distinguishability between probability distributions is the Wasserstein metric of order one. Thanks to the theory of optimal transport, there is a nice dual interpretation of the distinguishability measure: It is the cost required to change one distribution into the other, as measured by the distance on the real line. Note that the variational distance, which underlies  the completely bounded norm as used here, can be formulated similarly. Now the cost is measured by a different metric which is simply zero if the two values are identical and one otherwise~\cite{villani2008optimal}. Hence a possible way to incorporate finite precision limits into our setup would be to require the metric to be only sensitive to differences above some finite minimal length scale. It would be interesting to formalize both approaches and in particular their ``completely bounded'' versions.

\subsection{Comparison to Previous Results}
\label{sec:comparison}

In the recent work of Busch, Lahti and Werner~\cite{busch_proof_2013}, the authors used the Wasserstein metric of order two, corresponding to the mean squared error, as the underlying distance $D(.,.)$ to measure the closeness of probability distributions. If $\cM^Q$, $\cM^P$ are the marginals of some joint measurement of position $Q$ and momentum $P$, and $X_\rho$ denotes the distribution coming from applying the measurement $X$ to the state $\rho$, their relation reads
\begin{align}
	\label{eq:relation_wernergang}
	\sup_\rho D(\cM^Q_\rho,Q_\rho) \cdot \sup_\rho D(\cM^P_\rho,P_\rho) \geq c \,,
\end{align}
for some universal constant $c$. In~\cite{busch_measurement_2013}, the authors generalize their results to arbitrary Wasserstein metrics. As in our case, the two distinguishability quantities in \eqref{eq:relation_wernergang} are separately maximized over all states, and hence the resulting expression characterizes the goodness of the approximate measurement. 

One could instead ask for a ``coupled optimization'', a relation of the form 
\begin{align}
\label{eq:coupledsup}
\sup_\rho \left[D(\cM^Q_\rho,Q_\rho) D(\cM^P_\rho,P_\rho) \right]\geq c',
\end{align}
for some other constant $c'$.\footnote{Such an approach has been advocated by David Reeb (private communication).} While this statement certainly tells us that no device can accurately measure both position and momentum for all input states, the bound $c'$ only holds (and can only hold) for the worst possible input state. In contrast, the bounds found in Theorems~\ref{thm:jmvbs} and \ref{thm:ed}, as well as in \eqref{eq:relation_wernergang} are {state-independent} in the sense that the bound holds for all states. Indeed, the two approaches are more distinct than the similarities between \eqref{eq:relation_wernergang} and \eqref{eq:coupledsup} would suggest. By optimizing over input states separately, our results and those of \cite{werner_uncertainty_2004,busch_proof_2013,busch_measurement_2013} are statements about the properties of measurement devices themselves, independent of any particular experimental setup. State-dependent settings capture the behavior of measurement devices in specific experimental setups and must therefore account for the details of the input state. 

The same set of authors also studied the case of finite-dimensional systems, in particular qubit systems, again using the Wasserstein metric of order two~\cite{busch_heisenberg_2013}. Their results for this case are similar, with the product in \eqref{eq:relation_wernergang} replaced by a sum. Perhaps most closely related to our results is the recent work by Ipsen~\cite{ipsen_error-disturbance_2013}, who uses the variational distance as the underlying distinguishability measure to derive similar additive uncertainty relations. We note, however, that both \cite{busch_heisenberg_2013} and \cite{ipsen_error-disturbance_2013} only consider joint measurability and do not consider the change to the state after the approximate measurement is performed, as it is done in our error-disturbance relation. Furthermore, both base their distinguishability measures on the measurement statistics of the devices alone. But this does not necessarily tell us how distinguishable two devices ultimately are, as we could employ input states entangled with ancilla systems to test them. These two measures can be different~\cite{kitaev_quantum_1997}, even for entanglement-breaking channels~\cite{sacchi_entanglement_2005}.  

Entropic quantities are another means of comparing two probability distributions, an approach taken recently by Buscemi \emph{et al.}~\cite{buscemi_noise_2013} and Coles and Furrer~\cite{coles_entropic_2013} (see also Martens and de~Muynck~\cite{martens_disturbance_1992}). Both contributions formalize error and disturbance in terms of relative or conditional entropies, and derive their results from entropic uncertainty relations for state preparation which incorporate the effects of quantum entanglement~\cite{berta_uncertainty_2010,tomamichel_uncertainty_2011}. They differ in the choice of the entropic measure and the choice of the state on which the entropic terms are evaluated. Buscemi \emph{et al.} find state-independent error-disturbance relations involving the von Neumann entropy, evaluated for input states which describe  observable eigenstates chosen uniformly at random. As described in Sec.~\ref{sec:app}, the restriction to uniformly-random inputs is significant, and leads to a characterization of the average-case behavior of the device (averaged over the choice of input state), not the worst-case behavior as presented here. 
Meanwhile, Coles and Furrer make use of general R\'enyi-type entropies, hence also capturing the worst-case behavior. However, they are after a state-dependent error-disturbance relation which relates the amount of information a measurement device can extract from a state about the results of a \emph{future} measurement of one observable to the amount of disturbance caused to other observable. 

An important distinction between both these results and those presented here is the quantity appearing in the uncertainty bound, i.e.\ the quantification of complementarity of two observables. As both the aforementioned results are based on entropic state-preparation uncertainty relations, they each quantify complementarity by the largest overlap of the eigenstates of the two observables. This bound is trivial should the two observables share an eigenstate. However, a perfect joint measurement is clearly impossible even if the observables share all but two eigenvectors (if they share all but one, they necessarily share all eigenvectors). Both $c_1(X,Z)$ and $c_2(X;Z)$ used here are nontrivial whenever not all eigenvectors are shared between the observables.


\acknowledgements{Thanks to Omar Fawzi, Fabian Furrer, David Reeb, and Michael Walter for helpful discussions.
This work was supported by the by the German Science Foundation (grant CH 843/2-1), the Swiss National Science Foundation (through the National Centre of Competence in Research `Quantum Science and Technology' and grants No. 200020-135048, PP00P2-128455, 20CH21-138799 (CHIST-ERA project CQC)), by the European Research Council (grants No. 258932 and No. 337603) as well as by the Swiss State Secretariat for Education and Research through COST action MP1006. VBS is supported by an ETH Postdoctoral Fellowship.}

\section{Methods}
\label{sec:methods}
\subsection{Mathematical Setup}
A result of Helstrom~\cite{helstrom_detection_1967,helstrom_quantum_1976} shows that the distinguishability of two quantum \emph{states} $\rho_1$ and $\rho_2$ is precisely their trace distance, 
 $\delta(\rho_1,\rho_2):=\tfrac12\onenorm{\rho_1-\rho_2}$, where $\onenorm{A}:=\tr[\sqrt{A^\dagger A}]$. In other words, $p=\half(1+\delta(\rho_1,\rho_2))$. Distinguishability of states can be transferred to that of channels by asking for the most distinguishable states that two channels could produce from a common input. Since this distinguishability can be enhanced for inputs which are entangled with ancillary systems unaffected by the channel itself, one is lead to consider the \emph{diamond norm} of quantum channels~\cite{kitaev_quantum_1997}. 
 
 For a channel $\cE:\sB(\cH_A)\to \sB(\cH_B)$, the diamond norm is defined by 
\begin{align}
	\dnorm{\cE}:=\sup_{k\geq 1}\sup_{ \rho\in \sS(\cH_A\otimes \mathbb{C}^k)}\onenorm{\cE\otimes \cI_k(\rho)},
\end{align}
where $\cI_k$ is the identity channel from $\sB(\mathbb{C}^k)$ to itself which just reproduces its input. 
Using the diamond norm we arrive at the following distinguishability measure for quantum channels, 
\begin{align}
	\label{eq:epsdefdn}
	\delta(\cE_1,\cE_2)=\tfrac12\dnorm{\cE_1-\cE_2}.
\end{align}

This expression is not closed-form, as an optimization is required to evaluate the diamond  norm. 
However, in finite dimensions the diamond norm can be cast as a convex optimization, specifically as a semidefinite program~\cite{watrous_semidefinite_2009}. This makes numerical approximation tractable and will be analytically useful in the proofs to follow. 

Given a Hilbert space $\cH$ with basis $\{\ket{k}\}_{k=1}^d$, define $\ket\Omega=\sum_{k=1}^d\ket{k}\otimes \ket{k}\in \cH\otimes\cH$. Then, for any channel  $\cE:\sB(\cH_A)\to \sB(\cH_B)$, let $\sC$ 
denote the Choi mapping of $\cE$ to a bipartite state,
\begin{align}
	\sC(\cE):=\cE\otimes\cI(\ketbra\Omega)\in \sS(\cH_B\otimes \cH_A).
\end{align}
The diamond norm can then be expressed as
\begin{align}\label{eq:primal}
	\tfrac12\dnorm{\cE} =& \max \tr[\sC(\cE)Y]\\
	& \,{\rm s.t.}\, Y\leq \id_B\otimes \rho_A,\nonumber\\
	& \,\phantom{\rm s.t.}\, Y\geq 0,\nonumber\\
	& \,\phantom{\rm s.t.}\, \rho\in \sS(\cH_A)\nonumber.
\end{align}
This semidefinite program also comes in the dual form
\begin{align}\label{eq:dual}
	\tfrac12\dnorm{\cE} =& \min \opnorm{\tr_B(R)}\\
	& \,{\rm s.t.}\, R\geq \sC(\cE),\nonumber\\
	&\,\phantom{\rm s.t.}\, R\geq 0\nonumber
\end{align}
Here we have used the operator norm, defined by $\opnorm{A}=\sup_{\psi\in \cH} \bra\psi A\ket\psi$.
Both forms of the semidefinite program will be useful in the proofs to follow.

Finally, we arrive at the central technical tool required for our uncertainty relations, the continuity of the Stinespring representation for finite-dimensional systems.
\begin{theorem}[Stinespring Continuity \cite{kretschmann_information-disturbance_2008}]\label{thm:sc}
	Given two quantum channels $\mathcal E_1,\mathcal E_2:\sB(\cH_A)\to \sB(\cH_B)$ with corresponding Stinespring isometries $V_1:\cH_A\to \cH_B\otimes \cH_{E_1}$ and $V_2:\cH_A\to \cH_B\otimes \cH_{E_2}$, we have
	\begin{align*}
		\min_{U}\opnorm{UV_1-V_2}^2\leq \dnorm{\cE_1-\cE_2}\leq 2\min_{U}\opnorm{UV_1-V_2},
	\end{align*}
	with the minimum taken over all isometries $U:\cH_{E_1}\to \cH_{E_2}$.
	\end{theorem}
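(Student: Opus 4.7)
The theorem consists of two inequalities relating the channel-level distance $\dnorm{\cE_1 - \cE_2}$ to the Stinespring-level distance $\min_U \opnorm{UV_1 - V_2}$. The two halves are of different character: the upper bound follows from a direct trace-norm estimate, while the lower bound requires the Fuchs--van de Graaf inequality, Uhlmann's theorem, and a minimax exchange.

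For the upper bound $\dnorm{\cE_1 - \cE_2}\leq 2\opnorm{UV_1 - V_2}$, given any isometry $U:\cH_{E_1}\to \cH_{E_2}$, I would first observe that $\tilde V_1 := (\id_B\otimes U)V_1$ is itself a Stinespring dilation of $\cE_1$, now into the common environment $\cH_{E_2}$, since $U^\dagger U = \id_{E_1}$ gives $\tr_{E_2}[\tilde V_1\rho\tilde V_1^\dagger] = \tr_{E_1}[V_1\rho V_1^\dagger] = \cE_1(\rho)$. For any input $\rho_{AR}$ one then has $(\cE_1-\cE_2)\otimes\id_R(\rho_{AR}) = \tr_{E_2}[W_1\rho W_1^\dagger - W_2\rho W_2^\dagger]$, where $W_1 := \tilde V_1\otimes\id_R$ and $W_2 := V_2\otimes\id_R$. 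Writing $W_1\rho W_1^\dagger - W_2\rho W_2^\dagger = (W_1-W_2)\rho W_1^\dagger + W_2\rho(W_1-W_2)^\dagger$, applying $\onenorm{AB}\leq \opnorm{A}\onenorm{B}$ together with isometricity of the $W_i$, and using contractivity of the partial trace, one obtains $\onenorm{(\cE_1-\cE_2)\otimes\id_R(\rho_{AR})}\leq 2\opnorm{UV_1 - V_2}$. Taking the supremum over $\rho_{AR}$ and minimizing over $U$ yields the claim.

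For the lower bound $\min_U\opnorm{UV_1 - V_2}^2 \leq \dnorm{\cE_1 - \cE_2}$, I would combine two classical tools. The Fuchs--van de Graaf inequality $\onenorm{\sigma_1 - \sigma_2}\geq 2(1-F(\sigma_1,\sigma_2))$, applied to $\sigma_i = \cE_i(|x\rangle\langle x|)$ for a unit vector $|x\rangle\in \cH_A$, gives $\dnorm{\cE_1-\cE_2}\geq 2(1-F_x)$, with $F_x := F(\cE_1(|x\rangle\langle x|), \cE_2(|x\rangle\langle x|))$. Uhlmann's theorem applied to the explicit purifications $V_i|x\rangle\in\cH_B\otimes\cH_{E_i}$ computes $F_x = \max_U|\langle x|V_1^\dagger(\id_B\otimes U)V_2|x\rangle|$, the maximum taken over isometries $U$ between the environments. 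For each fixed $|x\rangle$ a global phase in $U$ may be chosen to render the optimal overlap real and positive, so isometricity of $V_1, V_2, U$ gives $\opnorm{(UV_1-V_2)|x\rangle}^2 = 2 - 2\Re\langle x|V_1^\dagger U^\dagger V_2|x\rangle$, whence $\min_U \opnorm{(UV_1-V_2)|x\rangle}^2 = 2(1-F_x)\leq \dnorm{\cE_1-\cE_2}$ for every unit $|x\rangle$.

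The technical heart of the theorem, and where I expect the main difficulty, is upgrading this pointwise inequality to the operator-norm statement: a single isometry $U$ is required for which $\opnorm{UV_1-V_2}^2\leq \dnorm{\cE_1-\cE_2}$, whereas the minimizing $U$ above may depend on $|x\rangle$. My plan is to interchange $\max_x$ and $\min_U$ in $\max_x\min_U \opnorm{(UV_1-V_2)|x\rangle}^2$ by applying Sion's minimax theorem to $f(U,\rho) := \tr[\rho(UV_1-V_2)^\dagger(UV_1-V_2)]$. On isometric $U$ this reduces to $2 - 2\Re\tr[\rho V_1^\dagger U^\dagger V_2]$, which is affine in the density matrix $\rho$ and affine in $U$; since the isometries form a compact but non-convex set, I would relax the $U$-domain to the convex, compact set of contractions and then verify via a polar-decomposition / extreme-point analysis that the relaxed optimum is still attained at a genuine isometry. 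This minimax step is more delicate than the two classical inequalities preceding it and constitutes the real content of the lower bound.
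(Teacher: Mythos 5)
The paper does not prove Theorem~\ref{thm:sc}; it imports it verbatim from Kretschmann, Schlingemann and Werner \cite{kretschmann_information-disturbance_2008}, so the relevant comparison is with that reference. Your skeleton is exactly theirs: the upper bound via the telescoping identity $W_1\rho W_1^\dagger-W_2\rho W_2^\dagger=(W_1-W_2)\rho W_1^\dagger+W_2\rho(W_1-W_2)^\dagger$, H\"older, and contractivity of the partial trace is complete and correct, and the lower bound via Fuchs--van de Graaf, Uhlmann and a minimax exchange is the right route. However, as written the lower-bound argument has two genuine gaps.

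First, you establish the pointwise bound $2(1-F_x)\le\dnorm{\cE_1-\cE_2}$ only for pure, unentangled inputs $\ket{x}\in\cH_A$. After the minimax exchange the supremum on the favourable side runs over \emph{all} density operators $\rho$ on $\cH_A$, and the inner quantity $\inf_U g(U,\rho)=2-2\max_U\Re\tr[\rho V_1^\dagger(\id_B\ot U)^\dagger V_2]$ is \emph{concave} in $\rho$ (it is $2$ minus a maximum of affine functions), so its supremum is generally attained at mixed $\rho$ and cannot be reduced to pure states. For mixed $\rho$ this quantity equals $2(1-F)$ for the ancilla-assisted outputs $(\cE_i\ot\cI)(\psi_\rho)$ on a purification $\psi_\rho$, so you must apply Fuchs--van de Graaf and Uhlmann to entangled inputs --- which is precisely why the diamond norm, not the unstabilized induced norm, appears in the theorem. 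The repair is standard (the ancilla factors out of the Uhlmann overlap, leaving exactly $\max_U|\tr[\rho V_1^\dagger(\id_B\ot U)^\dagger V_2]|$), but with only the pure-state bound your chain does not close.

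Second, the convexification step is not justified by the argument you propose. An extreme-point argument handles only the \emph{inner}, linear optimization (for fixed $\rho$, the maximum of $\Re\tr[\rho V_1^\dagger(\id_B\ot U)^\dagger V_2]$ over contractions is attained at an isometry). What you actually need is that the \emph{outer} infimum $\inf_U\sup_\rho g(U,\rho)$ over contractions is not smaller than the infimum over isometries; but $U\mapsto\sup_\rho g(U,\rho)$ is convex, its minimum over the convex compact set of contractions need not sit at an extreme point, and replacing an optimal contraction by its isometric polar part does not monotonically preserve the value. The standard fix, and the one underlying \cite{kretschmann_information-disturbance_2008}, is a dilation: promote an optimal contraction $U$ to the isometry $\tilde U\xi=U\xi\oplus(\id-U^\dagger U)^{1/2}\xi$ into the enlarged environment $\cH_{E_2}\oplus\cH_{E_1}$; since the correspondingly embedded $V_2$ has no component in the new summand, the cross term $V_1^\dagger(\id_B\ot\tilde U)^\dagger V_2$, and hence $\sup_\rho g$, is unchanged. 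Note this requires the freedom to enlarge the environment of $\cE_2$ (or $\cH_{E_2}$ large enough), which is how the KSW statement is phrased; with the fixed target $\cH_{E_2}$ in the theorem as quoted, this point deserves explicit attention.
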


\subsection{Joint Measurability}
A device $\cA_{X,Z}$ jointly measuring two observables $X$, $Z$ on a Hilbert space $\cH_S$ can be modeled by an isometry 
\begin{align}
	\begin{split}
	V&:\cH_S \to \cH_{R} \otimes \cH_{\hat X} \otimes \cH_{\hat X'}\otimes \cH_{\hat Z}\otimes \cH_{\hat Z'} \\
	V& = \sum_{xz} M_{xz} \otimes \ket{x}_{\hat X} \otimes \ket{x}_{\hat X'} \otimes \ket{z}_{\hat Z} \otimes \ket{z}_{\hat Z'}.
	\end{split}
\end{align}
Here, the Hilbert spaces $\cH_{\hat X}$, $\cH_{\hat Z}$ serve to record the (classical) measurement result, and doubling them ensures that no quantum coherence is present in systems $\hat X$ or $\hat Z$ alone. The spaces have dimension equal to the number of outputs, and $x$, $z$ label arbitrary bases (quite possibly the same). The operators $M_{x,z}:\cH_S\to \cH_R$ are the Kraus operators of $\cA_{X,Z}$.
 The maps $\cM_X:\sS(\cH_S)\to \sS(\cH_{\hat X})$ and $\cM_Z:\sS(\cH_S)\to \sS(\cH_{\hat Z})$ describing the measurement of $X$ or $Z$ by the device are determined by 
\begin{align}
	\cM_X(\rho) &= \tr_{R\hat Z\hat Z'\hat X'}[V \rho V^\dagger], \quad\text{and}\\
	\cM_Z(\rho) &=\tr_{R\hat X\hat X'\hat Z'}[V \rho V^\dagger].
\end{align} 
Particular Stinespring isometries for the ideal measurements $\cQ_X$, $\cQ_Z$ are given by 
	\begin{align} 
	\begin{split}		W_X& : \cH_S \to \cH_S \otimes \cH_{\hat X}\otimes \cH_{\hat X'},\\	W_X& = \sum_x Q_X(x) \otimes \ket{x}_{\hat X} \otimes \ket{x}_{\hat X'}, \quad \text{and}
	\end{split}\label{eq:QXisometry}\\
	\begin{split}
		W_Z &: \cH_S \to \cH_S \otimes \cH_{\hat Z}\otimes \cH_{\hat Z'},\\	W_Z &= \sum_z Q_Z(z) \otimes \ket{z}_{\hat Z}\otimes \ket{z}_{\hat Z'},
		\end{split}
	\end{align}
where $Q_X(x)$ and $Q_Z(z)$ denote the projection operators associated with the observables $X$ or $Z$. In terms of the eigenvectors of the respective observables, $Q_X(x)=\ketbra{\varphi_x}$ and $Q_Z(z)=\ketbra{\vartheta_z}$.
	
According to the lower bound in the Stinespring continuity Theorem \ref{thm:sc}, there exist isometries $U_X:\cH_S \to \cH_R \otimes \cH_{\hat X} \otimes \cH_{\hat X'}$ and $U_Z: \cH_S \to \cH_R \otimes \cH_{\hat Z} \otimes \cH_{\hat Z'}$ such that 	\begin{align}
		\opnorm{V - U_X\,W_Z}^2 &\leq \dnorm{\cM_Z - \cQ_Z} ,\\
		\opnorm{V - U_Z\,W_X}^2 &\leq \dnorm{\cM_X - \cQ_X} .
	\end{align} 
	Using the triangle inequality and the definition of error $\eps$, we find  
	\begin{align}
		\tfrac1{\sqrt{2}}\opnorm{U_XW_Z-U_ZW_X}&\leq \eps_X(\cA_{X,Z})^\half+\eps_Z(\cA_{X,Z})^\half. 
	\end{align}
	
In principle, this inequality already gives a bound on the errors $\eps_X$ and $\eps_Z$. 
However, it is implicitly a function of the measurement device, since $U_X$ and $U_Z$ are only characterized by the optimal choice in the Stinespring dilation, which itself turns on the description of the device. 
For finite-dimensional systems, we can find a bound which holds for all devices as follows.  
	
	\begin{proof}[Proof of Theorem~\ref{thm:jmvbs}]
First define the map $\cE_Z: \sB(\cH_S)\to \sB(\cH_{\hat X})$ by 
\begin{align}
	\cE_Z(\rho)&:=\tr_{R\hat Z\hat Z'\hat X'}[U_XW_Z\rho W_Z^\dagger U_X^\dagger],\\
	&\phantom{:}=\tr_{R\hat X'}[U_X\cQ_Z^\natural(\rho)U_X^\dagger].
\end{align}
Here, we have used the map $\cQ_Z^\natural$, which will be used later in the definition of disturbance; it is sometimes called a ``pinch map'' and is defined by 
\begin{align} 
	\cQ_Z^\natural(\rho) &= \sum_z Q_Z(z)\rho\, Q_Z(z)\\
	&= \sum_z \bra{\vartheta_z}\rho\ket{\vartheta_z}\, \ketbra{\vartheta_z}_S.
\end{align}

The ideal measurement  is $\cQ_X(\rho)=\tr_{S\hat X'}[W_X\rho W_X^\dagger]$, which can also be expressed as $\cQ_X(\rho)=\tr_{R\hat Z\hat Z'\hat X'}[U_ZW_X\rho W_X^\dagger U_Z^\dagger]$. By Stinespring continuity we therefore have  
\begin{align}
	 \opnorm{U_XW_Z - U_ZW_X} \geq \tfrac 12 \dnorm{\cQ_X-\cE_Z}.
	\end{align}
	
	Now we make use of the primal form of the semidefinite program given in~\eqref{eq:primal}. We are free to choose the basis in which $\ket\Omega$ is defined, so let us select $\{\ket{\varphi_x}\}$, the eigenbasis of the $X$ observable. Then we find $\sC(\cQ_X)=\sum_x \ketbra x_{\hat X}\otimes \ketbra{\varphi_x}_{S}$, while
\begin{align}
	\sC(\cE_Z)=&\sum_{z,y,y'}\bra{\varphi_{y'}}Q_Z(z)\ket{\varphi_y}\,\nonumber\\
	&\;\times \tr_{R\hat X'}[U_XQ_Z(z)U_X^\dagger]\otimes \ket{\varphi_y}\bra{\varphi_{y'}}_S.
\end{align}
Now define $\Lambda_x:= U_X^\dagger(\id_R\otimes \ketbra x_{\hat X}\otimes\id_{\hat X'}) U_X$ and let $Y=\ketbra x_{\hat X}\otimes \ketbra{\varphi_{x}}$ for some $x$. From~\eqref{eq:primal} we get
\begin{align}
	\tfrac 12& \dnorm{\cQ_X-\cE_Z} \nonumber\\&\geq \max_{x} \big(1-\sum_z \tr[Q_Z(z)\Lambda_x]\, |\langle\vartheta_z|\varphi_x\rangle|^2\big)\\
	&\geq 1-\min_x\big(\max_z |\langle\vartheta_z|\varphi_x\rangle|^2\sum_z\tr[Q_Z(z)\Lambda_x]\big)\\
	&=1-\min_x\big(\max_z |\langle\vartheta_z|\varphi_x\rangle|^2\,\tr[\Lambda_x]\big)\\
	&\geq1-\min_x\big(\max_z |\langle\vartheta_z|\varphi_x\rangle|^2\big)\,\min_x\tr[\Lambda_x]\\
	&\geq 1-\min_x\max_z |\langle\vartheta_z|\varphi_x\rangle|^2.
\end{align}
The last inequality follows because $\sum_x \Lambda_x=U_X^\dagger U_X=\id_S$, which then implies $\sum_{x=1}^d\tr[\Lambda_x]=d$ and therefore $\min_x \tr[\Lambda_x]\leq 1$.
\end{proof} 

	\subsection{Error-Disturbance Tradeoff}
	A device $\cA_X$ that measures $X$ on a Hilbert space $\cH_S$ and also produces an output state on $\cH_R$ can be modelled by an isometry
	\begin{align}
		\begin{split}
		V&:\cH_S\to \cH_R\otimes \cH_E\otimes \cH_{\hat X} \otimes \cH_{\hat X'} \\
	V& = \sum_{x} M_{x} \otimes \ket{x}_{\hat X} \otimes \ket{x}_{\hat X'}.
	\end{split}\label{eq:AXisometry}
\end{align}
Again the $M_x:\cH_S\to \cH_R\otimes \cH_E$ are the Kraus operators of the map, while $\cH_E$ is an extra system which may be needed to purify the output in $R$. As with $\cA_{X,Z}$, the measurement outcome is recorded in $\cH_{\hat X}$ (and $\cH_{\hat X'}$). 

Before proceeding to the proof of Theorem~\ref{thm:ed}, we first establish an intermediate result which states  that complementary channels of approximate measurements are themselves approximate measurements, possibly followed by state preparation. 
More precisely, if $\cA_X$ approximates the ideal measurement of $X$, then the output in system $R$ can be simulated by a map which simply prepares a state on $\cH_R$ conditional on the result $x$ of the measurement. 
We call $\cP:\sS(\cH_{\hat X})\to \sS(\cH_R\otimes \cH_{\hat X})$ a conditional preparation channel if it has the action $\cP(\ketbra x)=\rho^x_R\otimes \ketbra x_{\hat X}$, for some states $\rho^x_R$. Then, by Stinespring continuity we can show  	
\begin{theorem}\label{thm:measprep}
	Given a channel $\cA_X:\sS(\cH_S)\to \sS(\cH_R\otimes \cH_{\hat X})$, let $\cM_X:\sS(\cH_S)\to \sS(\cH_{\hat X})$ be just the output in $\hat X$, i.e.\ $\cM_X(\rho)=\tr_{R\hat X'}[\cA_X(\rho)]$. If 
	\begin{align}
		\dnorm{\cM_X-\cQ_X}\leq \eps,
	\end{align}
then there exists a conditional state preparation channel $\cP:\sS(\cH_{\hat X})\to \sS(\cH_R\otimes \cH_{\hat X})$ such that 
	\begin{align}
		\dnorm{\cA_X-\cP\circ \cQ_X}\leq 2\sqrt\eps.
	\end{align}
\end{theorem}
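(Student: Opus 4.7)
The plan is to combine Stinespring continuity (Theorem~\ref{thm:sc}) with the classicality of the $\hat X$ output register to pull the hypothesis on marginals up to a diamond-norm bound on $\cA_X$ itself, and then exhibit the approximating channel as a measure-and-prepare channel.

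First I would take $V$ from~\eqref{eq:AXisometry} as a Stinespring dilation of $\cM_X$ with environment $\cH_R\otimes\cH_E\otimes\cH_{\hat X'}$, and $W_X$ from~\eqref{eq:QXisometry} as a dilation of $\cQ_X$ with environment $\cH_S\otimes\cH_{\hat X'}$; both share the output space $\cH_{\hat X}$. Since $\dnorm{\cM_X-\cQ_X}\leq\eps$, the lower bound in Theorem~\ref{thm:sc} yields an isometry $U:\cH_S\otimes\cH_{\hat X'}\to\cH_R\otimes\cH_E\otimes\cH_{\hat X'}$ with $\opnorm{V-(\id_{\hat X}\otimes U)W_X}\leq\sqrt\eps$. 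Let $\tilde\cA_X$ be the channel whose Stinespring isometry is $(\id_{\hat X}\otimes U)W_X$ (with the same environment $\cH_E\otimes\cH_{\hat X'}$ as $\cA_X$); then the upper bound in the same theorem gives $\dnorm{\cA_X-\tilde\cA_X}\leq 2\sqrt\eps$.

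It remains to rewrite $\tilde\cA_X$, up to a harmless dephasing, in measure-and-prepare form. Because $\cA_X$'s $\hat X$ register is classical by construction, composing on the left with the pinching $\cQ_X^{\natural,\hat X}$ leaves $\cA_X$ invariant, and since any channel is diamond-norm contractive we obtain $\dnorm{\cA_X-\cQ_X^{\natural,\hat X}\circ\tilde\cA_X}\leq 2\sqrt\eps$. Using $Q_X(x)=\ketbra{\varphi_x}$ and defining $\ket{\xi^x}:=U(\ket{\varphi_x}\otimes\ket x_{\hat X'})$, one finds $(\id_{\hat X}\otimes U)W_X\ket\psi=\sum_x\langle\varphi_x|\psi\rangle\,\ket x_{\hat X}\otimes\ket{\xi^x}$. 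The pinching on $\hat X$ kills the $x\neq x'$ cross terms and leaves $(\cQ_X^{\natural,\hat X}\circ\tilde\cA_X)(\ketbra\psi)=\sum_x|\langle\varphi_x|\psi\rangle|^2\,\sigma^x_R\otimes\ketbra x_{\hat X}$, where $\sigma^x_R:=\tr_{E\hat X'}\ketbra{\xi^x}$ depends only on the outcome label $x$. Setting $\cP(\ketbra x_{\hat X}):=\sigma^x_R\otimes\ketbra x_{\hat X}$ gives a conditional-preparation channel satisfying $\cQ_X^{\natural,\hat X}\circ\tilde\cA_X=\cP\circ\cQ_X$, closing the argument.

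The main obstacle is the last step: Stinespring continuity by itself does not force $\tilde\cA_X$'s $\hat X$ register to be classical, so one cannot read off a measure-and-prepare form directly from the approximating isometry. The workaround is to exploit the classicality of $\cA_X$'s output to insert $\cQ_X^{\natural,\hat X}$ for free; this dephasing removes the coherent cross terms between different outcomes and leaves a channel whose $\cH_R$ output is determined solely by the measurement label $x$, precisely the structure needed to factor through $\cQ_X$.
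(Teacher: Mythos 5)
Your proposal is correct and follows the same skeleton as the paper's proof: both take the dilations $V$ and $W_X$ of \eqref{eq:AXisometry} and \eqref{eq:QXisometry}, invoke the lower bound of Theorem~\ref{thm:sc} to obtain an isometry $U$ with $\opnorm{V-UW_X}\leq\sqrt\eps$, pass back through the upper bound to get a channel within $2\sqrt\eps$ of $\cA_X$ in diamond norm, and then identify that channel as $\cP\circ\cQ_X$. Where you genuinely differ is in the final identification. The paper takes $U$ of the restricted form $U:\cH_S\to\cH_R\otimes\cH_E$, acting trivially on $\hat X'$, so that $\cE(\rho)=\tr_{E\hat X'}[UW_X\rho W_X^\dagger U^\dagger]$ is immediately of measure-and-prepare form; Theorem~\ref{thm:sc}, however, only guarantees an isometry between the full environments $\cH_S\otimes\cH_{\hat X'}$ and $\cH_R\otimes\cH_E\otimes\cH_{\hat X'}$, which is the general $U$ you work with. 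Your observation that such a $U$ can leave coherences between different outcomes in the $\hat X$ register of $\tilde\cA_X$ is exactly right, and your fix---exploiting that $\cA_X$ itself has a classical $\hat X$ output so the $\hat X$-basis dephasing can be inserted for free via monotonicity of the diamond norm, after which the cross terms vanish and the $R$ output depends only on the label $x$---closes the argument with no assumption on the form of $U$. What your route buys is robustness: it works for the isometry that Stinespring continuity actually provides, at the cost of one extra (but harmless) composition step; the paper's route is shorter but relies on the product form of $U$.
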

\begin{proof}
	We can reuse the Stinespring dilation $W_X$ of $\cQ_X$ given in \eqref{eq:QXisometry}, while the dilation of $\cA_X$ is given by \eqref{eq:AXisometry}. By the lower bound in Theorem~\ref{thm:sc}, the premise above implies 
	\begin{align}
		\opnorm{V-UW_X}\leq \sqrt\eps,
	\end{align}
for some isometry $U:\cH_S\to \cH_R\otimes \cH_E$. 

Now, for any $\rho\in \sS(\cH_S)$, 
\begin{align}	
\cA_X(\rho)=\tr_{E\hat X'}[V\rho V^\dagger].
\end{align} 
Thus, defining the map $\cE:\sS(\cH_S)\to \sS(\cH_R\otimes \cH_{\hat X})$ by 
\begin{align}
\cE(\rho):&=\tr_{E\hat X'}[UW_X\rho W_X^\dagger U^\dagger] \\
&=\tr_{E}[U\tr_{\hat X'}[W_X\rho W_X^\dagger] U^\dagger].
\end{align}
the upper bound in Theorem~\ref{thm:sc} implies
\begin{align}
	\dnorm{\cA_X-\cE}\leq 2\sqrt\eps.
\end{align}

All that remains to show is that $\cE=\cP\circ \cQ_X$ for some conditional preparation channel $\cP$. Using the form of $W_X$ we can express the action of $\cE$ as 
\begin{align}
	\cE(\rho) &= \sum_x\tr_{E}[U Q_X(x)\rho Q_X(x) U^\dagger]\otimes \ketbra x_{\hat X}\\
	&= \sum_x \bra{\varphi_x}\rho\ket{\varphi_x}\,\tr_E[U\ketbra{\varphi_x}U^\dagger]\otimes \ketbra x_{\hat X}.
\end{align}
This is a conditional preparation channel $\cP$ for $\rho^x_R=\tr_E[U\ketbra{\varphi_x}U^\dagger]$. 
\end{proof}

Now we can establish the error-disturbance bound. 
\begin{proof}[Proof of Theorem~\ref{thm:ed}]
An apparatus $\cA_X$ with error $\eps_X(\cA_X)$ satisfies the premise of Theorem~\ref{thm:measprep} with $\eps=2\eps_X(\cA_X)$, and therefore 
 	\begin{align}
 		\delta(\cA_X,\cP\circ \cQ_X)\leq \sqrt{2\eps_X(\cA_X)}
 	\end{align} 
for some conditional preparation channel $\cP$. 
By the triangle inequality, for any map $\cC$ we have
\begin{align}
\delta(\cA_X\circ \cQ_Z^\natural,\cC)\leq&\, \delta(\cA_X\circ \cQ_Z^\natural,\cP\circ\cQ_X\circ \cQ_Z^\natural)\nonumber\\&+ \delta(\cP\circ\cQ_X\circ \cQ_Z^\natural,\cC).
\end{align}
By montonicity, the first term on the righthand side is less than $\delta(\cA_X,\cP\circ\cQ_X)$ and therefore
\begin{align}
 		\delta(\cA_X\circ \cQ_Z^\natural,\cC)&\leq \sqrt{2\eps_X(\cA_X)}+\delta(\cP\circ \cQ_X \circ \cQ_Z^\natural,\cC).
 	\end{align}

Next, let $\cF:\sS(\cH_S)\to \cH_{\hat X}$ be the channel with action $\cF(\rho)\mapsto \sum_x p_x \ketbra x$ for some fixed probability distribution $p_x$ and define $\cC=\cP\circ \cF$. 
Again by monotonicity, the second term on the righthand side is less than $\delta(cQ_X\circ\cQ_Z^\natural,\cF)$, and so we obtain
\begin{align}
\min_{\cC}\delta(\cA_X\circ \cQ_Z^\natural,\cC)&\leq \sqrt{2\eps_X(\cA_X)}+\delta(\cQ_X \circ \cQ_Z^\natural,\cF),
 	\end{align}
since we are free to minimize over the maps $\cC$. This expression is equivalent to 
\begin{align}
\sqrt2\eps_X(\cA_X)^{\frac12}+\eta_Z(\cA_X)\geq 1-\delta(\cQ_X \circ \cQ_Z^\natural,\cF).
\end{align}

Now we make use of the dual form, Eq.~\eqref{eq:dual}, of the semidefinite program for the diamond norm to find an upper bound on $\delta(\cQ_X \circ \cQ_Z^\natural,\cF)$.
We first compute $\sC(\cQ_X \circ \cQ_Z^\natural-\cF)$ and then make a suitable choice of $R$. 
Choosing the basis of $\ket\Omega$ to be the $Z$ basis $\ket{\vartheta_z}$, we find that 
 	\begin{align}
 		\sC(\cQ_X \circ \cQ_Z^\natural) &=\sum_{xz}|\langle\varphi_x|\vartheta_z\rangle|^2\ketbra{x}_{\hat X}\otimes\ketbra{\vartheta_z}_S.
 	\end{align}	
For $\sC(\cF)$ we have simply $\sC(\cF)=\sum_x p_x\ketbra{x}_{\hat X}\otimes \id_S$.
Choose $p_x=\frac1d$ and define 
 	\begin{align}
 		R_{\hat X S}=\sum_{xz}\{\tfrac1d-|\langle\varphi_x|\vartheta_z\rangle|^2\}_+\,\ketbra{x}_{\hat X}\otimes\ketbra{\vartheta_z}_S,
 	\end{align}
 	which satisfies the two constraints of \eqref{eq:dual}. We then have
 	\begin{align}
 		\delta(\cQ^\natural_Z\circ \cQ_X,\cF) &\leq \max_z\sum_x \{\tfrac1d-|\langle\varphi_x|\vartheta_z\rangle|^2\}_+.
 	\end{align}
 	Defining $c_2(X;Z)=1-\max_z\sum_x \{\tfrac1d-|\langle\varphi_x|\vartheta_z\rangle|^2\}_+$ completes the proof.
 	\end{proof}
 	
 	Note that the proof comes down to how successive measurement of the two observables acts on the system, much like~\cite{tomamichel_uncertainty_2011}.

\bibliographystyle{apsrev4-1}
\bibliography{InfoDisturbance}

\begin{thebibliography}{56}%
\makeatletter
\providecommand \@ifxundefined [1]{%
 \@ifx{#1\undefined}
}%
\providecommand \@ifnum [1]{%
 \ifnum #1\expandafter \@firstoftwo
 \else \expandafter \@secondoftwo
 \fi
}%
\providecommand \@ifx [1]{%
 \ifx #1\expandafter \@firstoftwo
 \else \expandafter \@secondoftwo
 \fi
}%
\providecommand \natexlab [1]{#1}%
\providecommand \enquote  [1]{``#1''}%
\providecommand \bibnamefont  [1]{#1}%
\providecommand \bibfnamefont [1]{#1}%
\providecommand \citenamefont [1]{#1}%
\providecommand \href@noop [0]{\@secondoftwo}%
\providecommand \href [0]{\begingroup \@sanitize@url \@href}%
\providecommand \@href[1]{\@@startlink{#1}\@@href}%
\providecommand \@@href[1]{\endgroup#1\@@endlink}%
\providecommand \@sanitize@url [0]{\catcode `\\12\catcode `\$12\catcode
  `\&12\catcode `\#12\catcode `\^12\catcode `\_12\catcode `\%12\relax}%
\providecommand \@@startlink[1]{}%
\providecommand \@@endlink[0]{}%
\providecommand \url  [0]{\begingroup\@sanitize@url \@url }%
\providecommand \@url [1]{\endgroup\@href {#1}{\urlprefix }}%
\providecommand \urlprefix  [0]{URL }%
\providecommand \Eprint [0]{\href }%
\providecommand \doibase [0]{http://dx.doi.org/}%
\providecommand \selectlanguage [0]{\@gobble}%
\providecommand \bibinfo  [0]{\@secondoftwo}%
\providecommand \bibfield  [0]{\@secondoftwo}%
\providecommand \translation [1]{[#1]}%
\providecommand \BibitemOpen [0]{}%
\providecommand \bibitemStop [0]{}%
\providecommand \bibitemNoStop [0]{.\EOS\space}%
\providecommand \EOS [0]{\spacefactor3000\relax}%
\providecommand \BibitemShut  [1]{\csname bibitem#1\endcsname}%
\let\auto@bib@innerbib\@empty
\bibitem [{\citenamefont {Heisenberg}(1927)}]{heisenberg_uber_1927}%
  \BibitemOpen
  \bibfield  {author} {\bibinfo {author} {\bibfnamefont {W.}~\bibnamefont
  {Heisenberg}},\ }\href {\doibase 10.1007/BF01397280} {\bibfield  {journal}
  {\bibinfo  {journal} {Zeitschrift für Physik}\ }\textbf {\bibinfo {volume}
  {43}},\ \bibinfo {pages} {172} (\bibinfo {year} {1927})}\BibitemShut
  {NoStop}%
\bibitem [{\citenamefont {Wheeler}\ and\ \citenamefont
  {Zurek}(1984)}]{wheeler_quantum_1984}%
  \BibitemOpen
  \bibfield  {author} {\bibinfo {author} {\bibfnamefont {J.~A.}\ \bibnamefont
  {Wheeler}}\ and\ \bibinfo {author} {\bibfnamefont {W.~H.}\ \bibnamefont
  {Zurek}},\ }\href@noop {} {\emph {\bibinfo {title} {Quantum Theory and
  Measurement}}}\ (\bibinfo  {publisher} {Princeton University Press},\
  \bibinfo {year} {1984})\BibitemShut {NoStop}%
\bibitem [{\citenamefont {Robertson}(1929)}]{robertson_uncertainty_1929}%
  \BibitemOpen
  \bibfield  {author} {\bibinfo {author} {\bibfnamefont {H.~P.}\ \bibnamefont
  {Robertson}},\ }\href {\doibase 10.1103/PhysRev.34.163} {\bibfield  {journal}
  {\bibinfo  {journal} {Physical Review}\ }\textbf {\bibinfo {volume} {34}},\
  \bibinfo {pages} {163} (\bibinfo {year} {1929})}\BibitemShut {NoStop}%
\bibitem [{\citenamefont {Arthurs}\ and\ \citenamefont
  {Kelly}(1965)}]{arthurs_simultaneous_1965}%
  \BibitemOpen
  \bibfield  {author} {\bibinfo {author} {\bibfnamefont {E.}~\bibnamefont
  {Arthurs}}\ and\ \bibinfo {author} {\bibfnamefont {J.~L.}\ \bibnamefont
  {Kelly}},\ }\href
  {http://www.alcatel-lucent.com/bstj/vol44-1965/bstj-vol44-issue04.html}
  {\bibfield  {journal} {\bibinfo  {journal} {Bell System Technical Journal}\
  }\textbf {\bibinfo {volume} {44}},\ \bibinfo {pages} {725} (\bibinfo {year}
  {1965})}\BibitemShut {NoStop}%
\bibitem [{\citenamefont {She}\ and\ \citenamefont
  {Heffner}(1966)}]{she_simultaneous_1966}%
  \BibitemOpen
  \bibfield  {author} {\bibinfo {author} {\bibfnamefont {C.~Y.}\ \bibnamefont
  {She}}\ and\ \bibinfo {author} {\bibfnamefont {H.}~\bibnamefont {Heffner}},\
  }\href {\doibase 10.1103/PhysRev.152.1103} {\bibfield  {journal} {\bibinfo
  {journal} {Physical Review}\ }\textbf {\bibinfo {volume} {152}},\ \bibinfo
  {pages} {1103} (\bibinfo {year} {1966})}\BibitemShut {NoStop}%
\bibitem [{\citenamefont {Davies}(1976)}]{davies_quantum_1976}%
  \BibitemOpen
  \bibfield  {author} {\bibinfo {author} {\bibfnamefont {E.~B.}\ \bibnamefont
  {Davies}},\ }\href@noop {} {\emph {\bibinfo {title} {Quantum theory of open
  systems}}}\ (\bibinfo  {publisher} {Academic Press},\ \bibinfo {address}
  {London},\ \bibinfo {year} {1976})\BibitemShut {NoStop}%
\bibitem [{\citenamefont {Ali}\ and\ \citenamefont
  {Prugovečki}(1977)}]{ali_systems_1977}%
  \BibitemOpen
  \bibfield  {author} {\bibinfo {author} {\bibfnamefont {S.~T.}\ \bibnamefont
  {Ali}}\ and\ \bibinfo {author} {\bibfnamefont {E.}~\bibnamefont
  {Prugovečki}},\ }\href {\doibase 10.1063/1.523259} {\bibfield  {journal}
  {\bibinfo  {journal} {Journal of Mathematical Physics}\ }\textbf {\bibinfo
  {volume} {18}},\ \bibinfo {pages} {219} (\bibinfo {year} {1977})}\BibitemShut
  {NoStop}%
\bibitem [{\citenamefont {Prugovečki}(1977)}]{prugovecki_fuzzy_1977}%
  \BibitemOpen
  \bibfield  {author} {\bibinfo {author} {\bibfnamefont {E.}~\bibnamefont
  {Prugovečki}},\ }\href {\doibase 10.1088/0305-4470/10/4/016} {\bibfield
  {journal} {\bibinfo  {journal} {Journal of Physics A: Mathematical and
  General}\ }\textbf {\bibinfo {volume} {10}},\ \bibinfo {pages} {543}
  (\bibinfo {year} {1977})}\BibitemShut {NoStop}%
\bibitem [{\citenamefont {Busch}(1985)}]{busch_indeterminacy_1985}%
  \BibitemOpen
  \bibfield  {author} {\bibinfo {author} {\bibfnamefont {P.}~\bibnamefont
  {Busch}},\ }\href {\doibase 10.1007/BF00670074} {\bibfield  {journal}
  {\bibinfo  {journal} {International Journal of Theoretical Physics}\ }\textbf
  {\bibinfo {volume} {24}},\ \bibinfo {pages} {63} (\bibinfo {year}
  {1985})}\BibitemShut {NoStop}%
\bibitem [{\citenamefont {Busch}(1986)}]{busch_unsharp_1986}%
  \BibitemOpen
  \bibfield  {author} {\bibinfo {author} {\bibfnamefont {P.}~\bibnamefont
  {Busch}},\ }\href {\doibase 10.1103/PhysRevD.33.2253} {\bibfield  {journal}
  {\bibinfo  {journal} {Physical Review D}\ }\textbf {\bibinfo {volume} {33}},\
  \bibinfo {pages} {2253} (\bibinfo {year} {1986})}\BibitemShut {NoStop}%
\bibitem [{\citenamefont {Arthurs}\ and\ \citenamefont
  {Goodman}(1988)}]{arthurs_quantum_1988}%
  \BibitemOpen
  \bibfield  {author} {\bibinfo {author} {\bibfnamefont {E.}~\bibnamefont
  {Arthurs}}\ and\ \bibinfo {author} {\bibfnamefont {M.~S.}\ \bibnamefont
  {Goodman}},\ }\href {\doibase 10.1103/PhysRevLett.60.2447} {\bibfield
  {journal} {\bibinfo  {journal} {Physical Review Letters}\ }\textbf {\bibinfo
  {volume} {60}},\ \bibinfo {pages} {2447} (\bibinfo {year}
  {1988})}\BibitemShut {NoStop}%
\bibitem [{\citenamefont {Martens}\ and\ \citenamefont
  {de~Muynck}(1991)}]{martens_towards_1991}%
  \BibitemOpen
  \bibfield  {author} {\bibinfo {author} {\bibfnamefont {H.}~\bibnamefont
  {Martens}}\ and\ \bibinfo {author} {\bibfnamefont {W.~M.}\ \bibnamefont
  {de~Muynck}},\ }\href {\doibase 10.1016/0375-9601(91)91015-6} {\bibfield
  {journal} {\bibinfo  {journal} {Physics Letters A}\ }\textbf {\bibinfo
  {volume} {157}},\ \bibinfo {pages} {441} (\bibinfo {year}
  {1991})}\BibitemShut {NoStop}%
\bibitem [{\citenamefont {Ishikawa}(1991)}]{ishikawa_uncertainty_1991}%
  \BibitemOpen
  \bibfield  {author} {\bibinfo {author} {\bibfnamefont {S.}~\bibnamefont
  {Ishikawa}},\ }\href {\doibase 10.1016/0034-4877(91)90046-P} {\bibfield
  {journal} {\bibinfo  {journal} {Reports on Mathematical Physics}\ }\textbf
  {\bibinfo {volume} {29}},\ \bibinfo {pages} {257} (\bibinfo {year}
  {1991})}\BibitemShut {NoStop}%
\bibitem [{\citenamefont {Raymer}(1994)}]{raymer_uncertainty_1994}%
  \BibitemOpen
  \bibfield  {author} {\bibinfo {author} {\bibfnamefont {M.~G.}\ \bibnamefont
  {Raymer}},\ }\href {\doibase 10.1119/1.17657} {\bibfield  {journal} {\bibinfo
   {journal} {American Journal of Physics}\ }\textbf {\bibinfo {volume} {62}},\
  \bibinfo {pages} {986} (\bibinfo {year} {1994})}\BibitemShut {NoStop}%
\bibitem [{\citenamefont {Leonhardt}\ \emph {et~al.}(1995)\citenamefont
  {Leonhardt}, \citenamefont {Böhmer},\ and\ \citenamefont
  {Paul}}]{leonhardt_uncertainty_1995}%
  \BibitemOpen
  \bibfield  {author} {\bibinfo {author} {\bibfnamefont {U.}~\bibnamefont
  {Leonhardt}}, \bibinfo {author} {\bibfnamefont {B.}~\bibnamefont {Böhmer}},
  \ and\ \bibinfo {author} {\bibfnamefont {H.}~\bibnamefont {Paul}},\ }\href
  {\doibase 10.1016/0030-4018(95)00321-X} {\bibfield  {journal} {\bibinfo
  {journal} {Optics Communications}\ }\textbf {\bibinfo {volume} {119}},\
  \bibinfo {pages} {296} (\bibinfo {year} {1995})}\BibitemShut {NoStop}%
\bibitem [{\citenamefont {Appleby}(1998)}]{appleby_concept_1998}%
  \BibitemOpen
  \bibfield  {author} {\bibinfo {author} {\bibfnamefont {D.~M.}\ \bibnamefont
  {Appleby}},\ }\href {\doibase 10.1023/A:1026659601439} {\bibfield  {journal}
  {\bibinfo  {journal} {International Journal of Theoretical Physics}\ }\textbf
  {\bibinfo {volume} {37}},\ \bibinfo {pages} {1491} (\bibinfo {year}
  {1998})}\BibitemShut {NoStop}%
\bibitem [{\citenamefont {Hall}(2004)}]{hall_prior_2004}%
  \BibitemOpen
  \bibfield  {author} {\bibinfo {author} {\bibfnamefont {M.~J.~W.}\
  \bibnamefont {Hall}},\ }\href {\doibase 10.1103/PhysRevA.69.052113}
  {\bibfield  {journal} {\bibinfo  {journal} {Physical Review A}\ }\textbf
  {\bibinfo {volume} {69}},\ \bibinfo {pages} {052113} (\bibinfo {year}
  {2004})}\BibitemShut {NoStop}%
\bibitem [{\citenamefont {Werner}(2004)}]{werner_uncertainty_2004}%
  \BibitemOpen
  \bibfield  {author} {\bibinfo {author} {\bibfnamefont {R.~F.}\ \bibnamefont
  {Werner}},\ }\href {http://arxiv.org/abs/quant-ph/0405184} {\bibfield
  {journal} {\bibinfo  {journal} {Quantum Information and Computation}\
  }\textbf {\bibinfo {volume} {4}},\ \bibinfo {pages} {546} (\bibinfo {year}
  {2004})}\BibitemShut {NoStop}%
\bibitem [{\citenamefont {Ozawa}(2004)}]{ozawa_uncertainty_2004}%
  \BibitemOpen
  \bibfield  {author} {\bibinfo {author} {\bibfnamefont {M.}~\bibnamefont
  {Ozawa}},\ }\href {\doibase 10.1016/j.physleta.2003.12.001} {\bibfield
  {journal} {\bibinfo  {journal} {Physics Letters A}\ }\textbf {\bibinfo
  {volume} {320}},\ \bibinfo {pages} {367} (\bibinfo {year}
  {2004})}\BibitemShut {NoStop}%
\bibitem [{\citenamefont {Watanabe}\ \emph {et~al.}(2011)\citenamefont
  {Watanabe}, \citenamefont {Sagawa},\ and\ \citenamefont
  {Ueda}}]{watanabe_uncertainty_2011}%
  \BibitemOpen
  \bibfield  {author} {\bibinfo {author} {\bibfnamefont {Y.}~\bibnamefont
  {Watanabe}}, \bibinfo {author} {\bibfnamefont {T.}~\bibnamefont {Sagawa}}, \
  and\ \bibinfo {author} {\bibfnamefont {M.}~\bibnamefont {Ueda}},\ }\href
  {\doibase 10.1103/PhysRevA.84.042121} {\bibfield  {journal} {\bibinfo
  {journal} {Physical Review A}\ }\textbf {\bibinfo {volume} {84}},\ \bibinfo
  {pages} {042121} (\bibinfo {year} {2011})}\BibitemShut {NoStop}%
\bibitem [{\citenamefont {Busch}\ \emph
  {et~al.}(2013{\natexlab{a}})\citenamefont {Busch}, \citenamefont {Lahti},\
  and\ \citenamefont {Werner}}]{busch_proof_2013}%
  \BibitemOpen
  \bibfield  {author} {\bibinfo {author} {\bibfnamefont {P.}~\bibnamefont
  {Busch}}, \bibinfo {author} {\bibfnamefont {P.}~\bibnamefont {Lahti}}, \ and\
  \bibinfo {author} {\bibfnamefont {R.~F.}\ \bibnamefont {Werner}},\ }\href
  {\doibase 10.1103/PhysRevLett.111.160405} {\bibfield  {journal} {\bibinfo
  {journal} {Physical Review Letters}\ }\textbf {\bibinfo {volume} {111}},\
  \bibinfo {pages} {160405} (\bibinfo {year} {2013}{\natexlab{a}})}\BibitemShut
  {NoStop}%
\bibitem [{\citenamefont {Busch}\ \emph {et~al.}(2014)\citenamefont {Busch},
  \citenamefont {Lahti},\ and\ \citenamefont {Werner}}]{busch_heisenberg_2013}%
  \BibitemOpen
  \bibfield  {author} {\bibinfo {author} {\bibfnamefont {P.}~\bibnamefont
  {Busch}}, \bibinfo {author} {\bibfnamefont {P.}~\bibnamefont {Lahti}}, \ and\
  \bibinfo {author} {\bibfnamefont {R.~F.}\ \bibnamefont {Werner}},\ }\href
  {\doibase 10.1103/PhysRevA.89.012129} {\bibfield  {journal} {\bibinfo
  {journal} {Physical Review A}\ }\textbf {\bibinfo {volume} {89}},\ \bibinfo
  {pages} {012129} (\bibinfo {year} {2014})}\BibitemShut {NoStop}%
\bibitem [{\citenamefont {Busch}\ \emph
  {et~al.}(2013{\natexlab{b}})\citenamefont {Busch}, \citenamefont {Lahti},\
  and\ \citenamefont {Werner}}]{busch_measurement_2013}%
  \BibitemOpen
  \bibfield  {author} {\bibinfo {author} {\bibfnamefont {P.}~\bibnamefont
  {Busch}}, \bibinfo {author} {\bibfnamefont {P.}~\bibnamefont {Lahti}}, \ and\
  \bibinfo {author} {\bibfnamefont {R.~F.}\ \bibnamefont {Werner}},\ }\href
  {http://arxiv.org/abs/1312.4392} {\bibfield  {journal} {\bibinfo  {journal}
  {{arXiv:1312.4392} [quant-ph]}\ } (\bibinfo {year}
  {2013}{\natexlab{b}})}\BibitemShut {NoStop}%
\bibitem [{\citenamefont {Braginsky}\ and\ \citenamefont
  {Khalili}(1992)}]{braginsky_quantum_1992}%
  \BibitemOpen
  \bibfield  {author} {\bibinfo {author} {\bibfnamefont {V.~B.}\ \bibnamefont
  {Braginsky}}\ and\ \bibinfo {author} {\bibfnamefont {F.~Y.}\ \bibnamefont
  {Khalili}},\ }\href@noop {} {\emph {\bibinfo {title} {Quantum Measurement}}}\
  (\bibinfo  {publisher} {Cambridge University Press},\ \bibinfo {year}
  {1992})\BibitemShut {NoStop}%
\bibitem [{\citenamefont {Martens}\ and\ \citenamefont
  {Muynck}(1992)}]{martens_disturbance_1992}%
  \BibitemOpen
  \bibfield  {author} {\bibinfo {author} {\bibfnamefont {H.}~\bibnamefont
  {Martens}}\ and\ \bibinfo {author} {\bibfnamefont {W.~M.~d.}\ \bibnamefont
  {Muynck}},\ }\href {\doibase 10.1088/0305-4470/25/18/021} {\bibfield
  {journal} {\bibinfo  {journal} {Journal of Physics A: Mathematical and
  General}\ }\textbf {\bibinfo {volume} {25}},\ \bibinfo {pages} {4887}
  (\bibinfo {year} {1992})}\BibitemShut {NoStop}%
\bibitem [{\citenamefont {Ozawa}(2003)}]{ozawa_universally_2003}%
  \BibitemOpen
  \bibfield  {author} {\bibinfo {author} {\bibfnamefont {M.}~\bibnamefont
  {Ozawa}},\ }\href {\doibase 10.1103/PhysRevA.67.042105} {\bibfield  {journal}
  {\bibinfo  {journal} {Physical Review A}\ }\textbf {\bibinfo {volume} {67}},\
  \bibinfo {pages} {042105} (\bibinfo {year} {2003})}\BibitemShut {NoStop}%
\bibitem [{\citenamefont {Watanabe}\ and\ \citenamefont
  {Ueda}(2011)}]{watanabe_quantum_2011}%
  \BibitemOpen
  \bibfield  {author} {\bibinfo {author} {\bibfnamefont {Y.}~\bibnamefont
  {Watanabe}}\ and\ \bibinfo {author} {\bibfnamefont {M.}~\bibnamefont
  {Ueda}},\ }\href {http://arxiv.org/abs/1106.2526} {\bibfield  {journal}
  {\bibinfo  {journal} {{arXiv:1106.2526} [quant-ph]}\ } (\bibinfo {year}
  {2011})}\BibitemShut {NoStop}%
\bibitem [{\citenamefont {Branciard}(2013)}]{branciard_error-tradeoff_2013}%
  \BibitemOpen
  \bibfield  {author} {\bibinfo {author} {\bibfnamefont {C.}~\bibnamefont
  {Branciard}},\ }\href {\doibase 10.1073/pnas.1219331110} {\bibfield
  {journal} {\bibinfo  {journal} {Proceedings of the National Academy of
  Sciences}\ }\textbf {\bibinfo {volume} {110}},\ \bibinfo {pages} {6742}
  (\bibinfo {year} {2013})}\BibitemShut {NoStop}%
\bibitem [{\citenamefont {Buscemi}\ \emph {et~al.}(2014)\citenamefont
  {Buscemi}, \citenamefont {Hall}, \citenamefont {Ozawa},\ and\ \citenamefont
  {Wilde}}]{buscemi_noise_2013}%
  \BibitemOpen
  \bibfield  {author} {\bibinfo {author} {\bibfnamefont {F.}~\bibnamefont
  {Buscemi}}, \bibinfo {author} {\bibfnamefont {M.}~\bibnamefont {Hall}},
  \bibinfo {author} {\bibfnamefont {M.}~\bibnamefont {Ozawa}}, \ and\ \bibinfo
  {author} {\bibfnamefont {M.~M.}\ \bibnamefont {Wilde}},\ }\href {\doibase
  10.1103/PhysRevLett.112.050401} {\bibfield  {journal} {\bibinfo  {journal}
  {Phys. Rev. Lett.}\ }\textbf {\bibinfo {volume} {112}},\ \bibinfo {pages}
  {050401} (\bibinfo {year} {2014})}\BibitemShut {NoStop}%
\bibitem [{\citenamefont {Ipsen}(2013)}]{ipsen_error-disturbance_2013}%
  \BibitemOpen
  \bibfield  {author} {\bibinfo {author} {\bibfnamefont {A.~C.}\ \bibnamefont
  {Ipsen}},\ }\href {http://arxiv.org/abs/1311.0259} {\bibfield  {journal}
  {\bibinfo  {journal} {{arXiv:1311.0259} [quant-ph]}\ } (\bibinfo {year}
  {2013})}\BibitemShut {NoStop}%
\bibitem [{\citenamefont {Coles}\ and\ \citenamefont
  {Furrer}(2013)}]{coles_entropic_2013}%
  \BibitemOpen
  \bibfield  {author} {\bibinfo {author} {\bibfnamefont {P.~J.}\ \bibnamefont
  {Coles}}\ and\ \bibinfo {author} {\bibfnamefont {F.}~\bibnamefont {Furrer}},\
  }\href {http://arxiv.org/abs/1311.7637} {\bibfield  {journal} {\bibinfo
  {journal} {{arXiv:1311.7637} [quant-ph]}\ } (\bibinfo {year}
  {2013})}\BibitemShut {NoStop}%
\bibitem [{\citenamefont {Berta}\ \emph {et~al.}(2010)\citenamefont {Berta},
  \citenamefont {Christandl}, \citenamefont {Colbeck}, \citenamefont {Renes},\
  and\ \citenamefont {Renner}}]{berta_uncertainty_2010}%
  \BibitemOpen
  \bibfield  {author} {\bibinfo {author} {\bibfnamefont {M.}~\bibnamefont
  {Berta}}, \bibinfo {author} {\bibfnamefont {M.}~\bibnamefont {Christandl}},
  \bibinfo {author} {\bibfnamefont {R.}~\bibnamefont {Colbeck}}, \bibinfo
  {author} {\bibfnamefont {J.~M.}\ \bibnamefont {Renes}}, \ and\ \bibinfo
  {author} {\bibfnamefont {R.}~\bibnamefont {Renner}},\ }\href {\doibase
  10.1038/nphys1734} {\bibfield  {journal} {\bibinfo  {journal} {Nature
  Physics}\ }\textbf {\bibinfo {volume} {6}},\ \bibinfo {pages} {659} (\bibinfo
  {year} {2010})}\BibitemShut {NoStop}%
\bibitem [{\citenamefont {Tomamichel}\ and\ \citenamefont
  {Renner}(2011)}]{tomamichel_uncertainty_2011}%
  \BibitemOpen
  \bibfield  {author} {\bibinfo {author} {\bibfnamefont {M.}~\bibnamefont
  {Tomamichel}}\ and\ \bibinfo {author} {\bibfnamefont {R.}~\bibnamefont
  {Renner}},\ }\href {\doibase 10.1103/PhysRevLett.106.110506} {\bibfield
  {journal} {\bibinfo  {journal} {Physical Review Letters}\ }\textbf {\bibinfo
  {volume} {106}},\ \bibinfo {pages} {110506} (\bibinfo {year}
  {2011})}\BibitemShut {NoStop}%
\bibitem [{\citenamefont {Berta}\ \emph {et~al.}(2013)\citenamefont {Berta},
  \citenamefont {Christandl}, \citenamefont {Furrer}, \citenamefont {Scholz},\
  and\ \citenamefont {Tomamichel}}]{berta_continuous_2013}%
  \BibitemOpen
  \bibfield  {author} {\bibinfo {author} {\bibfnamefont {M.}~\bibnamefont
  {Berta}}, \bibinfo {author} {\bibfnamefont {M.}~\bibnamefont {Christandl}},
  \bibinfo {author} {\bibfnamefont {F.}~\bibnamefont {Furrer}}, \bibinfo
  {author} {\bibfnamefont {V.~B.}\ \bibnamefont {Scholz}}, \ and\ \bibinfo
  {author} {\bibfnamefont {M.}~\bibnamefont {Tomamichel}},\ }\href
  {http://arxiv.org/abs/1308.4527} {\bibfield  {journal} {\bibinfo  {journal}
  {arXiv:1308.4527 [quant-ph]}\ } (\bibinfo {year} {2013})}\BibitemShut
  {NoStop}%
\bibitem [{\citenamefont {Tomamichel}\ \emph {et~al.}(2012)\citenamefont
  {Tomamichel}, \citenamefont {Lim}, \citenamefont {Gisin},\ and\ \citenamefont
  {Renner}}]{tomamichel_tight_2012}%
  \BibitemOpen
  \bibfield  {author} {\bibinfo {author} {\bibfnamefont {M.}~\bibnamefont
  {Tomamichel}}, \bibinfo {author} {\bibfnamefont {C.~C.~W.}\ \bibnamefont
  {Lim}}, \bibinfo {author} {\bibfnamefont {N.}~\bibnamefont {Gisin}}, \ and\
  \bibinfo {author} {\bibfnamefont {R.}~\bibnamefont {Renner}},\ }\href
  {\doibase 10.1038/ncomms1631} {\bibfield  {journal} {\bibinfo  {journal}
  {Nature Communications}\ }\textbf {\bibinfo {volume} {3}},\ \bibinfo {pages}
  {634} (\bibinfo {year} {2012})}\BibitemShut {NoStop}%
\bibitem [{\citenamefont {Furrer}\ \emph {et~al.}(2012)\citenamefont {Furrer},
  \citenamefont {Franz}, \citenamefont {Berta}, \citenamefont {Leverrier},
  \citenamefont {Scholz}, \citenamefont {Tomamichel},\ and\ \citenamefont
  {Werner}}]{furrer_continuous_2012}%
  \BibitemOpen
  \bibfield  {author} {\bibinfo {author} {\bibfnamefont {F.}~\bibnamefont
  {Furrer}}, \bibinfo {author} {\bibfnamefont {T.}~\bibnamefont {Franz}},
  \bibinfo {author} {\bibfnamefont {M.}~\bibnamefont {Berta}}, \bibinfo
  {author} {\bibfnamefont {A.}~\bibnamefont {Leverrier}}, \bibinfo {author}
  {\bibfnamefont {V.~B.}\ \bibnamefont {Scholz}}, \bibinfo {author}
  {\bibfnamefont {M.}~\bibnamefont {Tomamichel}}, \ and\ \bibinfo {author}
  {\bibfnamefont {R.~F.}\ \bibnamefont {Werner}},\ }\href {\doibase
  10.1103/PhysRevLett.109.100502} {\bibfield  {journal} {\bibinfo  {journal}
  {Physical Review Letters}\ }\textbf {\bibinfo {volume} {109}},\ \bibinfo
  {pages} {100502} (\bibinfo {year} {2012})}\BibitemShut {NoStop}%
\bibitem [{\citenamefont {Kretschmann}\ \emph
  {et~al.}(2008{\natexlab{a}})\citenamefont {Kretschmann}, \citenamefont
  {Schlingemann},\ and\ \citenamefont {Werner}}]{kretschmann_continuity_2008}%
  \BibitemOpen
  \bibfield  {author} {\bibinfo {author} {\bibfnamefont {D.}~\bibnamefont
  {Kretschmann}}, \bibinfo {author} {\bibfnamefont {D.}~\bibnamefont
  {Schlingemann}}, \ and\ \bibinfo {author} {\bibfnamefont {R.~F.}\
  \bibnamefont {Werner}},\ }\href {\doibase 16/j.jfa.2008.07.023} {\bibfield
  {journal} {\bibinfo  {journal} {Journal of Functional Analysis}\ }\textbf
  {\bibinfo {volume} {255}},\ \bibinfo {pages} {1889} (\bibinfo {year}
  {2008}{\natexlab{a}})}\BibitemShut {NoStop}%
\bibitem [{\citenamefont {Kretschmann}\ \emph
  {et~al.}(2008{\natexlab{b}})\citenamefont {Kretschmann}, \citenamefont
  {Schlingemann},\ and\ \citenamefont
  {Werner}}]{kretschmann_information-disturbance_2008}%
  \BibitemOpen
  \bibfield  {author} {\bibinfo {author} {\bibfnamefont {D.}~\bibnamefont
  {Kretschmann}}, \bibinfo {author} {\bibfnamefont {D.}~\bibnamefont
  {Schlingemann}}, \ and\ \bibinfo {author} {\bibfnamefont {R.}~\bibnamefont
  {Werner}},\ }\href {\doibase 10.1109/TIT.2008.917696} {\bibfield  {journal}
  {\bibinfo  {journal} {{IEEE} Transactions on Information Theory}\ }\textbf
  {\bibinfo {volume} {54}},\ \bibinfo {pages} {1708} (\bibinfo {year}
  {2008}{\natexlab{b}})}\BibitemShut {NoStop}%
\bibitem [{\citenamefont {Lacerda}\ \emph {et~al.}()\citenamefont {Lacerda},
  \citenamefont {Renes},\ and\ \citenamefont {Renner}}]{lacerda}%
  \BibitemOpen
  \bibfield  {author} {\bibinfo {author} {\bibfnamefont {F.~G.}\ \bibnamefont
  {Lacerda}}, \bibinfo {author} {\bibfnamefont {J.~M.}\ \bibnamefont {Renes}},
  \ and\ \bibinfo {author} {\bibfnamefont {R.}~\bibnamefont {Renner}},\
  }\href@noop {} {\emph {\bibinfo {title} {Classical leakage resilience from
  fault-tolerant quantum computation}}},\ \bibinfo {type} {in
  preparation}\BibitemShut {NoStop}%
\bibitem [{\citenamefont {Kraus}(1983)}]{kraus_states_1983}%
  \BibitemOpen
  \bibfield  {author} {\bibinfo {author} {\bibfnamefont {K.}~\bibnamefont
  {Kraus}},\ }\href@noop {} {\emph {\bibinfo {title} {States, Effects, and
  Operations: Fundamental Notions of Quantum Theory}}},\ \bibinfo {series}
  {Lecture notes in physics}\ No.\ \bibinfo {number} {190}\ (\bibinfo
  {publisher} {Springer-Verlag},\ \bibinfo {address} {Berlin},\ \bibinfo {year}
  {1983})\BibitemShut {NoStop}%
\bibitem [{\citenamefont {Nielsen}\ and\ \citenamefont
  {Chuang}(2000)}]{nielsen_quantum_2000}%
  \BibitemOpen
  \bibfield  {author} {\bibinfo {author} {\bibfnamefont {M.~A.}\ \bibnamefont
  {Nielsen}}\ and\ \bibinfo {author} {\bibfnamefont {I.~L.}\ \bibnamefont
  {Chuang}},\ }\href@noop {} {\emph {\bibinfo {title} {Quantum Computation and
  Quantum Information}}}\ (\bibinfo  {publisher} {Cambridge University Press},\
  \bibinfo {year} {2000})\BibitemShut {NoStop}%
\bibitem [{\citenamefont {Stinespring}(1955)}]{stinespring_positive_1955}%
  \BibitemOpen
  \bibfield  {author} {\bibinfo {author} {\bibfnamefont {W.~F.}\ \bibnamefont
  {Stinespring}},\ }\href {\doibase 10.1090/S0002-9939-1955-0069403-4}
  {\bibfield  {journal} {\bibinfo  {journal} {Proceedings of the American
  Mathematical Society}\ }\textbf {\bibinfo {volume} {6}},\ \bibinfo {pages}
  {211} (\bibinfo {year} {1955})}\BibitemShut {NoStop}%
\bibitem [{\citenamefont {Paulsen}(2003)}]{paulsen_completely_2003}%
  \BibitemOpen
  \bibfield  {author} {\bibinfo {author} {\bibfnamefont {V.}~\bibnamefont
  {Paulsen}},\ }\href@noop {} {\emph {\bibinfo {title} {Completely Bounded Maps
  and Operator Algebras}}},\ \bibinfo {series} {Cambridge Studies in Advanced
  Mathematics}, Vol.~\bibinfo {volume} {78}\ (\bibinfo  {publisher} {Cambridge
  University Press},\ \bibinfo {year} {2003})\BibitemShut {NoStop}%
\bibitem [{\citenamefont {Shor}\ and\ \citenamefont
  {Preskill}(2000)}]{shor_simple_2000}%
  \BibitemOpen
  \bibfield  {author} {\bibinfo {author} {\bibfnamefont {P.~W.}\ \bibnamefont
  {Shor}}\ and\ \bibinfo {author} {\bibfnamefont {J.}~\bibnamefont
  {Preskill}},\ }\href {\doibase 10.1103/PhysRevLett.85.441} {\bibfield
  {journal} {\bibinfo  {journal} {Physical Review Letters}\ }\textbf {\bibinfo
  {volume} {85}},\ \bibinfo {pages} {441} (\bibinfo {year} {2000})}\BibitemShut
  {NoStop}%
\bibitem [{\citenamefont {Devetak}(2005)}]{devetak_private_2005}%
  \BibitemOpen
  \bibfield  {author} {\bibinfo {author} {\bibfnamefont {I.}~\bibnamefont
  {Devetak}},\ }\href {\doibase 10.1109/TIT.2004.839515} {\bibfield  {journal}
  {\bibinfo  {journal} {Information Theory, {IEEE} Transactions on}\ }\textbf
  {\bibinfo {volume} {51}},\ \bibinfo {pages} {44} (\bibinfo {year}
  {2005})}\BibitemShut {NoStop}%
\bibitem [{\citenamefont {Renes}(2011)}]{renes_duality_2011}%
  \BibitemOpen
  \bibfield  {author} {\bibinfo {author} {\bibfnamefont {J.~M.}\ \bibnamefont
  {Renes}},\ }\href {\doibase 10.1098/rspa.2010.0445} {\bibfield  {journal}
  {\bibinfo  {journal} {Proceedings of the Royal Society A}\ }\textbf {\bibinfo
  {volume} {467}},\ \bibinfo {pages} {1604} (\bibinfo {year}
  {2011})}\BibitemShut {NoStop}%
\bibitem [{\citenamefont {Renes}(2012)}]{renes_physics_2012}%
  \BibitemOpen
  \bibfield  {author} {\bibinfo {author} {\bibfnamefont {J.~M.}\ \bibnamefont
  {Renes}},\ }\emph {\bibinfo {title} {The Physics of Quantum Information:
  Complementarity, Uncertainty, and Entanglement}},\ \bibinfo {type} {Habilitation thesis},\
  \bibinfo  {school} {{TU} Darmstadt} (\bibinfo {year} {2012}),\ \href
  {http://arxiv.org/abs/1212.2379} {\bibinfo
  {note} {arXiv:1212.2379 [quant-ph]}}\BibitemShut {NoStop}%
\bibitem [{\citenamefont {Erhart}\ \emph {et~al.}(2012)\citenamefont {Erhart},
  \citenamefont {Sponar}, \citenamefont {Sulyok}, \citenamefont {Badurek},
  \citenamefont {Ozawa},\ and\ \citenamefont
  {Hasegawa}}]{erhart_experimental_2012}%
  \BibitemOpen
  \bibfield  {author} {\bibinfo {author} {\bibfnamefont {J.}~\bibnamefont
  {Erhart}}, \bibinfo {author} {\bibfnamefont {S.}~\bibnamefont {Sponar}},
  \bibinfo {author} {\bibfnamefont {G.}~\bibnamefont {Sulyok}}, \bibinfo
  {author} {\bibfnamefont {G.}~\bibnamefont {Badurek}}, \bibinfo {author}
  {\bibfnamefont {M.}~\bibnamefont {Ozawa}}, \ and\ \bibinfo {author}
  {\bibfnamefont {Y.}~\bibnamefont {Hasegawa}},\ }\href {\doibase
  10.1038/nphys2194} {\bibfield  {journal} {\bibinfo  {journal} {Nature
  Physics}\ }\textbf {\bibinfo {volume} {8}},\ \bibinfo {pages} {185} (\bibinfo
  {year} {2012})}\BibitemShut {NoStop}%
\bibitem [{\citenamefont {Rozema}\ \emph {et~al.}(2012)\citenamefont {Rozema},
  \citenamefont {Darabi}, \citenamefont {Mahler}, \citenamefont {Hayat},
  \citenamefont {Soudagar},\ and\ \citenamefont
  {Steinberg}}]{rozema_violation_2012}%
  \BibitemOpen
  \bibfield  {author} {\bibinfo {author} {\bibfnamefont {L.~A.}\ \bibnamefont
  {Rozema}}, \bibinfo {author} {\bibfnamefont {A.}~\bibnamefont {Darabi}},
  \bibinfo {author} {\bibfnamefont {D.~H.}\ \bibnamefont {Mahler}}, \bibinfo
  {author} {\bibfnamefont {A.}~\bibnamefont {Hayat}}, \bibinfo {author}
  {\bibfnamefont {Y.}~\bibnamefont {Soudagar}}, \ and\ \bibinfo {author}
  {\bibfnamefont {A.~M.}\ \bibnamefont {Steinberg}},\ }\href {\doibase
  10.1103/PhysRevLett.109.100404} {\bibfield  {journal} {\bibinfo  {journal}
  {Physical Review Letters}\ }\textbf {\bibinfo {volume} {109}},\ \bibinfo
  {pages} {100404} (\bibinfo {year} {2012})}\BibitemShut {NoStop}%
\bibitem [{\citenamefont {Owari}\ \emph {et~al.}(2008)\citenamefont {Owari},
  \citenamefont {Plenio}, \citenamefont {Polzik}, \citenamefont {Serafini},\
  and\ \citenamefont {Wolf}}]{owari_squeezing_2008}%
  \BibitemOpen
  \bibfield  {author} {\bibinfo {author} {\bibfnamefont {M.}~\bibnamefont
  {Owari}}, \bibinfo {author} {\bibfnamefont {M.~B.}\ \bibnamefont {Plenio}},
  \bibinfo {author} {\bibfnamefont {E.~S.}\ \bibnamefont {Polzik}}, \bibinfo
  {author} {\bibfnamefont {A.}~\bibnamefont {Serafini}}, \ and\ \bibinfo
  {author} {\bibfnamefont {M.~M.}\ \bibnamefont {Wolf}},\ }\href {\doibase
  10.1088/1367-2630/10/11/113014} {\bibfield  {journal} {\bibinfo  {journal}
  {New Journal of Physics}\ }\textbf {\bibinfo {volume} {10}},\ \bibinfo
  {pages} {113014} (\bibinfo {year} {2008})}\BibitemShut {NoStop}%
\bibitem [{\citenamefont {Villani}(2008)}]{villani2008optimal}%
  \BibitemOpen
  \bibfield  {author} {\bibinfo {author} {\bibfnamefont {C.}~\bibnamefont
  {Villani}},\ }\href@noop {} {\emph {\bibinfo {title} {Optimal Transport: Old
  and New}}},\ Grundlehren der mathematischen Wissenschaften\ (\bibinfo
  {publisher} {Springer-Verlag},\ \bibinfo {address} {Berlin},\ \bibinfo {year}
  {2008})\BibitemShut {NoStop}%
\bibitem [{\citenamefont {Kitaev}(1997)}]{kitaev_quantum_1997}%
  \BibitemOpen
  \bibfield  {author} {\bibinfo {author} {\bibfnamefont {A.~Y.}\ \bibnamefont
  {Kitaev}},\ }\href {\doibase 10.1070/RM1997v052n06ABEH002155} {\bibfield
  {journal} {\bibinfo  {journal} {Russian Mathematical Surveys}\ }\textbf
  {\bibinfo {volume} {52}},\ \bibinfo {pages} {1191} (\bibinfo {year}
  {1997})}\BibitemShut {NoStop}%
\bibitem [{\citenamefont {Sacchi}(2005)}]{sacchi_entanglement_2005}%
  \BibitemOpen
  \bibfield  {author} {\bibinfo {author} {\bibfnamefont {M.~F.}\ \bibnamefont
  {Sacchi}},\ }\href {\doibase 10.1103/PhysRevA.72.014305} {\bibfield
  {journal} {\bibinfo  {journal} {Physical Review A}\ }\textbf {\bibinfo
  {volume} {72}},\ \bibinfo {pages} {014305} (\bibinfo {year}
  {2005})}\BibitemShut {NoStop}%
\bibitem [{\citenamefont {Helstrom}(1967)}]{helstrom_detection_1967}%
  \BibitemOpen
  \bibfield  {author} {\bibinfo {author} {\bibfnamefont {C.~W.}\ \bibnamefont
  {Helstrom}},\ }\href {\doibase 10.1016/S0019-9958(67)90302-6} {\bibfield
  {journal} {\bibinfo  {journal} {Information and Control}\ }\textbf {\bibinfo
  {volume} {10}},\ \bibinfo {pages} {254} (\bibinfo {year} {1967})}\BibitemShut
  {NoStop}%
\bibitem [{\citenamefont {Helstrom}(1976)}]{helstrom_quantum_1976}%
  \BibitemOpen
  \bibfield  {author} {\bibinfo {author} {\bibfnamefont {C.~W.}\ \bibnamefont
  {Helstrom}},\ }\href@noop {} {\emph {\bibinfo {title} {Quantum detection and
  estimation theory}}},\ \bibinfo {series} {Mathematics in Science and
  Engineering}, Vol.\ \bibinfo {volume} {123}\ (\bibinfo  {publisher} {Academic
  Press},\ \bibinfo {address} {London},\ \bibinfo {year} {1976})\BibitemShut
  {NoStop}%
\bibitem [{\citenamefont {Watrous}(2009)}]{watrous_semidefinite_2009}%
  \BibitemOpen
  \bibfield  {author} {\bibinfo {author} {\bibfnamefont {J.}~\bibnamefont
  {Watrous}},\ }\href {\doibase 10.4086/toc.2009.v005a011} {\bibfield
  {journal} {\bibinfo  {journal} {Theory of Computing}\ }\textbf {\bibinfo
  {volume} {5}},\ \bibinfo {pages} {217} (\bibinfo {year} {2009})}\BibitemShut
  {NoStop}%
\end{thebibliography}%
\end{document}